\newcommand{\CP}{\mathbb{CP}}
\newcommand{\SP}{{\mathcal S}}
\newcommand{\SL}{\operatorname{SL}(2,\mathbb{C})}
\newcommand{\jj}{\jmath}
\newcommand{\Hodge}{\ast}
\newcommand{\RR}{I}
\newcommand{\Lv}{\mathbbm{L}}
\newcommand{\Rv}{\mathbbm{R}}
\newcommand{\zero}{{0'}}
\newcommand{\Li}{\vec{\mathbb L}}
\newcommand{\Ki}{\vec{\mathbb K}}
\newcommand{\geom}{\Delta}
\DeclareMathOperator{\Lie}{Lie}
\theoremstyle{theorem}
\newcommand{\R}{{\mathbb R}}
\newcommand{\C}{{\mathbb C}}
\newcommand{\M}{{\mathbb M}}
\newcommand{\LL}{{\mathcal L}}
\renewcommand{\bf}{\mathbbm}
\newcommand{\z}{{\bf z}}
\newcommand{\w}{{\bf w}}
\renewcommand{\u}{{\bf u}}
\renewcommand{\v}{{\bf v}}
\newcommand{\bmat}[1]{\left(
   \begin{array}{cc} #1
    \end{array}\right)}%
\newtheorem{thm}{Theorem}
\newtheorem{lm}{Lemma}
\newtheorem{df}{Definition}
\DeclareMathOperator{\tr}{tr}
\DeclareMathOperator{\Span}{span}
\begin{document}

\title{The hessian in spin foam models}

\author[1]{Wojciech Kami\'nski\thanks{wkaminsk@fuw.edu.pl}}
\author[2]{Hanno Sahlmann\thanks{hanno.sahlmann@gravity.fau.de}}
\affil[1]{\small Instytut Fizyki Teoretycznej, Wydzia{\l} Fizyki, Uniwersytet Warszawski, ul. Pasteura 5 PL-02093 Warszawa, Poland}
\affil[2]{\small Institute for Quantum Gravity, Department of Physics, Friedrich-Alexander Universit\"at Erlangen-N\"urnberg (FAU), 
Staudtstr. 7 D-91058 Erlangen, Germany}

\date{\today}

\maketitle

\abstract{We fill one of the remaining gaps in the asymptotic analysis of the vertex amplitudes of the Engle-Pereira-Rovelli-Livine (EPRL) spin foam models: We show that the hessian is nondegenerate for the stationary points that corresponds to geometric nondegenerate $4$ simplices. Our analysis covers the case when all faces are spacelike.}

\section{Introduction}

One of the central results of the research on spin foam models (defined in \cite{EPRL, FK} and extended in \cite{Conrady2010}) is the asymptotic analysis of the vertex amplitude accomplished in \cite{Conrady-Freidel, Frank3, FrankEPRL} for the euclidean case and in \cite{Frank2, Kaminski2018, Liu2019} for the lorentzian case). The graviton propagator \cite{Alesci2008, Bianchi2009, Bianchi2012}, the relation to Regge calculus and various semiclassical limits \cite{Han2011, Han2013a} are all based on this result. Let us mention that exactly the asymptotic analysis \cite{4dWilliams} of the vertex of the Barrett-Crane model \cite{Barrett-Crane} led to the discovery of nongeometric sectors \cite{Alesci2007} and in consequence to the invention of the EPRL model. However, it is important to keep in mind that the analysis of the vertex amplitude does not capture all properties of the model  -- as seen by so-called flatness problem \cite{Bonzom, Hellmann2013} that is not visible in the asymptotics of a single vertex.

The proof of the asymptotic formula for various spin foam models is not completely water-tight because of a few issues. First of all, the proof is based on stationary phase method and typically integration is done over noncompact domains. It is not clear if there are any contributions from infinity or from boundary of the domain of integration. In the Hnybida-Conrady extension \cite{Conrady2010} it is even not known if the amplitude is finite at all. Secondly, the contribution from a stationary point depends on whether the point is nondegenerate (i.e. the hessian at that point has no zero eigenvectors, after gauge fixing) or not. These issues were summarized in our previous paper \cite{Kaminski2018}.

The current paper is devoted to the problem of whether or not the hessian is nondegenerate for a given stationary point. The only analytic result in this direction that we know about for $4d$ models is the result \cite{Kaminski2013} for the Barrett-Crane model \cite{Barrett-Crane}. For the euclidean EPRL model it was checked for specific examples that the hessian is nondegenerate\footnote{Frank Hellmann, private communication.} so its determinant is nonzero for generic boundary data. However, the example of the Barrett-Crane model can serve as a warning, as in this case the hessian is degenerate for configurations where the map from lengths to areas of the $4$-simplex is not locally invertible. The lorentzian models are more complicated. The number of integration variables makes the determination of the determinant of the hessian an almost intractable task.

In this paper we will show that for the EPRL models  in both, euclidean and  lorentzian signature (we consider also Hnybida-Conrady extension), with spacelike faces, the hessian is nondegenerate for every stationary point that corresponds to a nondegenerate $4$-simplex (of either lorentzian or other signature). 

We will first consider the euclidean EPRL model with Barbero-Immirzi parameter $\gamma<1$, as it can be treated in a considerably simpler way. The crucial observation for our analysis of this case is the specific behavior of the hessian for actions satisfying a certain reality condition: If $e^{iS}$ denotes the integrand of the amplitude, then the imaginary part of the action is nonnegative, 
\begin{equation}
\label{eq:reality}
\Im S\geq 0.
\end{equation}
In order to extend our result to the case of lorentzian models we introduce a reduced action that is more closely related to the action of the euclidean model. The reduced action is defined in such a way that non-degeneracy of its hessian is equivalent to the non-degeneracy of that of the full action. We the reexpress the analysis of the euclidean amplitude in symplectic geometric terms. The geometric theory of such actions is based on positive lagrangeans that were introduced by \cite{Hormander-pos}. This makes it applicable to the lorentzian case as well. 

The main reference for our notation is \cite{Kaminski2018}. There are a few departures from that notation, for which we refer the reader to Appendix \ref{sec:appendix}.

\section{Euclidean EPRL model with $\gamma<1$.}

In the following, our terminology and, in particular what is real and what is imaginary is based on the convention that the integrand of the integral we are approximating is $e^{iS}$, and we will cal $S$ the action. We note that this is different from the convention of \cite{Frank2}.

For a symmetric (or hermitian) form $H$ we will use the notation
\begin{equation}
Hv=H(\cdot,v).
\end{equation}
We will say that the vector $v$ annihilates $H$ if
\begin{equation}
Hv=0.
\end{equation}
For a real symmetric form $I$ we write $I\geq 0$ if for any real vector $w$
\begin{equation}
I(w,w)\geq 0.
\end{equation}
This is equivalent to the condition that for any complex vector $v$
\begin{equation}
I(\bar{v},v)\geq 0.
\end{equation}

\begin{lm}\label{lm:pos}
Assume that the symmetric complex form $H$ can be decomposed as $H=R+iI$ where $R$ and $I$ are real and $I\geq 0$. Then the following conditions for a vector $v$ are equivalent:
\begin{enumerate}
\item $v$ annihilates $H$
\begin{equation}
Hv=0.
\end{equation}
\item The following is true for the real and imaginary part of the vector $v$ ($v=\Re v+i\Im v$):
\begin{equation}
R\Re v=R\Im v=I\Re v=I \Im v=0.
\end{equation}
\end{enumerate}
\end{lm}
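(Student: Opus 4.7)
The direction (2) $\Rightarrow$ (1) is immediate: writing $v=\Re v+i\Im v$ and expanding $Hv=(R+iI)\Re v+i(R+iI)\Im v$ shows that the vanishing of the four real products forces $Hv=0$. The content of the lemma is therefore in the converse, which is where the hypothesis $I\geq 0$ must be used.

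For (1) $\Rightarrow$ (2), my plan is to first contract the equation $Hv=0$ with $\bar v$. Since $R$ and $I$ are real and symmetric, both $R(\bar v,v)$ and $I(\bar v,v)$ are real (this follows from $\overline{R(\bar v,v)}=R(v,\bar v)=R(\bar v,v)$, and similarly for $I$). Hence the complex equation $\bar v^T H v=0$ splits into the two real conditions
\begin{equation}
R(\bar v,v)=0,\qquad I(\bar v,v)=0.
\end{equation}

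The second of these is the key. Expanding $v=\Re v+i\Im v$ and using symmetry of $I$ to cancel the mixed terms gives
\begin{equation}
I(\bar v,v)=I(\Re v,\Re v)+I(\Im v,\Im v)=0.
\end{equation}
Since $I\geq 0$, both summands are nonnegative, and so both vanish. At this point I invoke the standard fact that a positive semidefinite real symmetric form has a Cauchy--Schwarz type property: $I(w,w)=0$ for a real vector $w$ implies $Iw=0$ (expand $I(w+tu,w+tu)\geq 0$ in $t\in\R$ to rule out a nonzero linear coefficient). Applied to $w=\Re v$ and $w=\Im v$, this yields $I\Re v=I\Im v=0$.

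Finally, I substitute back into $Hv=0$. Separating real and imaginary parts gives $R\Re v-I\Im v=0$ and $R\Im v+I\Re v=0$, and since the $I$-terms have just been shown to vanish we conclude $R\Re v=R\Im v=0$ as well. The main (and really only) subtle point is the step where positivity of $I$ is converted into annihilation, which requires both the identity $I(\bar v,v)=I(\Re v,\Re v)+I(\Im v,\Im v)$ and the Cauchy--Schwarz argument; the rest is algebraic bookkeeping.
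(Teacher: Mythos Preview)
Your proof is correct and follows essentially the same strategy as the paper: show first that $I$ annihilates $\Re v$ and $\Im v$ using positivity, then read off $R\Re v=R\Im v=0$ from the real and imaginary parts of $Hv=0$. The only difference is in how the intermediate identity $I(\Re v,\Re v)=I(\Im v,\Im v)=0$ is reached. The paper separates $Hv=0$ into $R\Re v=I\Im v$ and $R\Im v=-I\Re v$ and then uses the symmetry of $R$ to obtain $I(\Im v,\Im v)=R(\Im v,\Re v)=R(\Re v,\Im v)=-I(\Re v,\Re v)$; you instead contract directly with $\bar v$ to get $I(\bar v,v)=I(\Re v,\Re v)+I(\Im v,\Im v)=0$. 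Your route is arguably cleaner in that the symmetry of $R$ plays no role in the positivity step, while the paper's version has the small advantage of not needing to verify that $R(\bar v,v)$ and $I(\bar v,v)$ are real. Both arguments invoke the same Cauchy--Schwarz step to pass from $I(w,w)=0$ to $Iw=0$, which the paper leaves implicit and you spell out.
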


\begin{proof}
Let us write $v=v_r+iv_a$ where $v_r$ and $v_a$ are real.

We have from the linearity of the forms
\begin{equation}
0=Hv=(Rv_r-Iv_a)+i(R v_a+Iv_r),
\end{equation}
thus $Rv_r=Iv_a$ and $Rv_a=-Iv_r$. Moreover from the symmetry of $R$
\begin{align}
I(v_a,v_a)=R(v_a,v_r)=R(v_r,v_a)=-I(v_r,v_r).
\end{align}
As $I\geq 0$ we see that $I(v_a,v_a)=0$ and $I(v_r,v_r)=0$, thus
\begin{equation}
Iv_r=Iv_a=0
\end{equation}
and also $Rv_r=Rv_a=0$.
\end{proof}

\begin{lm}\label{lm:sum}
Suppose that the symmetric real form $I=\sum_\alpha I_\alpha$ and $I_\alpha\geq 0$. Then
\begin{equation}
Iv=0\Longleftrightarrow \forall_\alpha I_\alpha v=0.
\end{equation}
\end{lm}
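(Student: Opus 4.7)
The direction $(\Leftarrow)$ is immediate: if $I_\alpha v = 0$ for every $\alpha$, then $Iv = \sum_\alpha I_\alpha v = 0$ by linearity. So the content is in the forward direction, and the plan is to combine positivity with a Cauchy--Schwarz argument.

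First I would reduce to the case of a real vector. Write $v = v_r + i v_a$ with $v_r, v_a$ real. Since $I$ (and each $I_\alpha$) is a real form, $Iv = Iv_r + i\, Iv_a$, so the hypothesis $Iv = 0$ is equivalent to the two real equations $Iv_r = 0$ and $Iv_a = 0$. It therefore suffices to prove the implication for a real vector $w$: if $Iw = 0$ then $I_\alpha w = 0$ for every $\alpha$.

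Next, evaluate $Iw = 0$ on $w$ itself to get the scalar identity
\begin{equation}
0 = I(w,w) = \sum_\alpha I_\alpha(w,w).
\end{equation}
Because each $I_\alpha \geq 0$, every term on the right-hand side is nonnegative, so each must vanish: $I_\alpha(w,w) = 0$ for all $\alpha$.

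Finally, I would conclude using the Cauchy--Schwarz inequality for the positive semidefinite symmetric form $I_\alpha$, namely
\begin{equation}
|I_\alpha(w,u)|^2 \leq I_\alpha(w,w)\, I_\alpha(u,u) = 0
\end{equation}
valid for every real $u$. This forces $I_\alpha(w,u) = 0$ for all $u$, i.e.\ $I_\alpha w = 0$, completing the proof. There is no real obstacle here; the only point worth stating carefully is that positivity of the individual $I_\alpha$ (not merely of the sum) is what allows the single scalar equation $I(w,w)=0$ to split into the many equations $I_\alpha(w,w)=0$, after which Cauchy--Schwarz upgrades each of those to the vector statement $I_\alpha w = 0$.
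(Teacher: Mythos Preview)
Your proof is correct and follows essentially the same approach as the paper: evaluate the hypothesis on $v$ itself, use nonnegativity of each summand to conclude $I_\alpha(\bar v,v)=0$ (or $I_\alpha(w,w)=0$), and then invoke positivity (Cauchy--Schwarz) to upgrade this to $I_\alpha v=0$. The only cosmetic difference is that the paper works directly with the complex vector via $I(\bar v,v)=0$, whereas you first split into real and imaginary parts; both routes are equivalent.
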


\begin{proof}
We have $I(\bar{v},v)=0$ thus $\sum_\alpha I_\alpha (\bar{v},v)=0$. All terms are positive, thus each of them needs to be zero, but due to positivity this implies that $I_\alpha v=0$.
\end{proof}

\subsection{Hessian in euclidean EPRL}

The manifold of integration is $\prod_{i=1}^4 Spin(4)$ and thus the vectors of the tangent space can be described by
\begin{equation}
v\colon \{1,\ldots,5\}\rightarrow \R^3\oplus \R^3,\quad v(5)=0.
\end{equation}
We will denote self-dual (anti-self-dual) part by $v^\pm$.

The tensor of second derivatives of the action (the hessian) is given by \cite{Frank3}\footnote{Published version.}
\begin{equation}
H(v,v')=H^+(v^+,{v'}^+)+H^-(v^-,{v'}^-).
\end{equation}
Let us consider the self-dual part (the antiself-dual is analogous). We can write $H^+$ as
\begin{equation}
H=R+i\left(\sum_{1\leq a<b\leq 5} I_{ab}\right),
\end{equation}
where $I_{ab}$ are given by
\begin{equation}
I_{ab}(v^+,v^+)=I'_{ab}(v^+(a)-v^+(b),v^+(a)-v^+(b))
\end{equation}
in terms of symmetric real forms $I'_{ab}\colon \R^3\times \R^3\rightarrow \R$
\begin{equation}
I'_{ab}(w,w)=\frac{j_{ab}^+}{2}\left(|w|^2-(w\cdot n_{ab}^+)^2\right).
\end{equation}
This form is $\frac{j_{ab}^+}{2}$ times the expectation value of the projector onto the space perpendicular to $n_{ab}^+$, so it is nonnegative ($\frac{j_{ab}^+}{2}\geq 0$), thus also $I_{ab}\geq 0$.

The real form $R$ is given by
\begin{equation}
R(v,v')=\sum_{a,b\in\{1,\ldots 5\}}\frac{j_{ab}^+}{2}n_{ab}\cdot v^+(a)\times {v'}^+(b),
\end{equation}
where we use the convention that $n_{ab}=-n_{ba}$.

\begin{lm}
If $\det H=0$ then there exist $a\not=b\in\{1,\ldots 4\}$ such that $n_{a5}$, $n_{ab}$, $n_{b5}$ are linearly dependent.
\end{lm}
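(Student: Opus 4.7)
The plan is to use Lemmas \ref{lm:pos} and \ref{lm:sum} to reduce the problem to the constraints imposed by the imaginary part of the hessian alone. Since $H$ decomposes as $H^+ \oplus H^-$ along the self-dual and anti-self-dual parts, $\det H = 0$ forces $\det H^+ = 0$ or $\det H^- = 0$; these cases being analogous, I would assume the former and pick a nonzero $v\colon \{1,\ldots,5\} \to \R^3$ with $v(5) = 0$ in $\ker H^+$. Writing $H^+ = R + i \sum_{a<b} I_{ab}$ with each $I_{ab}\geq 0$, Lemma \ref{lm:pos} lets me replace $v$ by whichever of its real or imaginary parts is nonzero, so I may take $v$ real; then Lemma \ref{lm:sum} delivers $I_{ab} v = 0$ for every pair $a < b$.

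The explicit form of $I'_{ab}$ shows that $I_{ab}(v,v) = 0$ for real $v$ is equivalent (for $j_{ab}^+ > 0$) to $v(a) - v(b)$ being parallel to $n_{ab}^+$, so I can write $v(a) - v(b) = \lambda_{ab} n_{ab}^+$ with real scalars $\lambda_{ab}$. Using $v(5) = 0$ this gives $v(a) = \lambda_{a5} n_{a5}^+$ for each $a \in \{1,\ldots,4\}$, and the compatibility of the pairwise relations yields
\begin{equation}
\lambda_{a5}\, n_{a5}^+ \;-\; \lambda_{b5}\, n_{b5}^+ \;-\; \lambda_{ab}\, n_{ab}^+ \;=\; 0
\end{equation}
for every $a \neq b$ in $\{1,\ldots,4\}$. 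Since $v \neq 0$, at least one $\lambda_{a5}$ is nonzero; picking that index as $a$ and any other $b \in \{1,\ldots,4\}$ makes the above relation nontrivial, so $n_{a5}^+, n_{b5}^+, n_{ab}^+$ are linearly dependent, which is the desired conclusion.

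The main point to note is that only the imaginary-part condition $I v = 0$ is used; the stronger constraint $R v = 0$ provided by Lemma \ref{lm:pos} is not needed here, though I expect it to enter any sharper refinement of the statement. A tacit assumption is $j_{ab}^+ > 0$ on the relevant faces, which is natural on the support of a nondegenerate stationary point, and the case $\det H^- = 0$ is handled by the identical argument applied to the anti-self-dual block, giving the analogous dependence among the $n^-$ vectors.
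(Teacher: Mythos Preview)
Your proof is correct and follows essentially the same route as the paper: reduce via Lemmas \ref{lm:pos} and \ref{lm:sum} to a nonzero real vector annihilated by every $I_{ab}$, read off $v(a)\parallel n_{a5}^+$ from the $I_{a5}$ conditions, and combine with the $I_{ab}$ condition to obtain the nontrivial linear relation among $n_{a5}^+, n_{b5}^+, n_{ab}^+$. The only cosmetic difference is that the paper phrases the final alternative as ``either $v(a)=v(b)=0$ or the three vectors are dependent'' for each pair, whereas you directly pick an index $a$ with $\lambda_{a5}\neq 0$; both arrive at the same conclusion.
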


\begin{proof}
If $\det H=0$ then there exists a nonzero vector $v'$ such that $Hv'=0$ thus Lemma \ref{lm:pos} assures that there exists a nonzero real vector $v$ that is annihilated by $\Im H$. It needs to be annihilated by every $I_{ab}$ due to Lemma \ref{lm:sum}. The conditions
\begin{equation}
I_{5a}v=0,\quad I_{5b}v=0
\end{equation}
give
\begin{equation}
v(a)=\lambda_bn_{a5},\quad v(b)=\lambda_an_{b5},
\end{equation}
where $\lambda_k\in \R$. The condition $I_{ab}v=0$ gives
\begin{equation}
v(a)-v(b)=\lambda_5 n_{ab},
\end{equation}
thus
\begin{equation}
\lambda_bn_{a5}-\lambda_an_{b5}-\lambda_5 n_{ab}=0.
\end{equation}
Either $v(a)=v(b)=0$ or $n_{ab},n_{a5},n_{b5}$ are linearly dependent. 

As this is true for all $a,b$ we have either $v=0$ (contradiction) or there exist $a,b$ fulfilling the statement of the lemma.
\end{proof}

\begin{thm}
The hessian for the euclidean EPRL model with $\gamma<1$ is nondegenerate for any stationary point that corresponds to a nondegenerate $4$-simplex.
\end{thm}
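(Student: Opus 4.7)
The previous lemma has reduced matters to a purely geometric statement: if $\det H=0$, there exist $a\neq b\in\{1,\ldots,4\}$ such that the triple $n_{a5}^+,n_{ab}^+,n_{b5}^+$ (or its anti-self-dual analogue) is linearly dependent in $\R^3$. Indeed, $H=H^+\oplus H^-$ decouples orthogonally, so $\det H = \det H^+\cdot\det H^-$ and degeneracy of $H$ forces degeneracy in at least one chiral sector. My plan is to rule out this remaining linear dependence at any stationary point corresponding to a nondegenerate $4$-simplex.

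First I would invoke Barrett's reconstruction at stationary points, as used in \cite{Frank3}: at such a point, $n_{ab}^+$ and $n_{ab}^-$ are, up to nonzero positive scalars, the self-dual and anti-self-dual projections of the bivector $B_{ab}\in\Lambda^2\R^4$ of the triangle shared by tetrahedra $a$ and $b$ in the reconstructed geometric $4$-simplex. This identification is the main conceptual input and the step I would be most careful about, since it is what gives the phrase ``corresponds to a nondegenerate $4$-simplex'' any content at the level of the $n_{ab}^\pm$. Crucially, the three triangles $(a5),(ab),(b5)$ all share the edge $e$ of the simplex joining the two vertices not in $\{a,b,5\}$, so $B_{a5},B_{ab},B_{b5}$ all lie in the three-dimensional subspace $e\wedge\R^4\subset\Lambda^2\R^4$ of bivectors divisible by $e$.

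Next I would establish two elementary facts. (i) The bivectors $B_{a5},B_{ab},B_{b5}$ are linearly independent: placing one endpoint of $e$ at the origin they take the form $e\wedge p_1,e\wedge p_2,e\wedge p_3$, where $p_1,p_2,p_3$ are the remaining three vertices, and a nontrivial linear relation would force $p_1,p_2,p_3$ to be dependent modulo $e$, contradicting affine independence of the five vertices. (ii) In euclidean signature, the self-dual and anti-self-dual projections are injective on $e\wedge\R^4$: the Hodge dual of a bivector divisible by $e$ lies in the orthogonal complement of $e\wedge\R^4$ inside $\Lambda^2\R^4$, so neither $B+\ast B$ nor $B-\ast B$ can vanish unless $B$ does. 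Combining (i), (ii) and the proportionality of $n_{ab}^\pm$ to the chiral projection of $B_{ab}$ yields linear independence of $n_{a5}^\pm,n_{ab}^\pm,n_{b5}^\pm$ in either sector, which contradicts the previous lemma and forces $\det H\neq 0$.
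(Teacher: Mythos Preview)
Your argument is correct but proceeds differently from the paper. The paper dispatches the geometric step by citing an external result: linear dependence of $n_{ab},n_{a5},n_{b5}$ makes a certain Gram-type matrix $\tilde{G}_{ab5}$ from \cite{Kaminski2018} degenerate, and Lemma~28 there then forces the stationary point to be a vector geometry or a degenerate simplex rather than a nondegenerate $4$-simplex. In other words, the paper argues via the classification of stationary points (one versus two solutions), not via the geometry of the reconstructed bivectors.

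Your route is more direct and self-contained: you invoke the reconstruction to identify $n_{ab}^{\pm}$ with the chiral halves of the geometric face bivectors, observe that the three faces in question share the edge $e$ opposite $\{a,b,5\}$ so that $B_{a5},B_{ab},B_{b5}\in e\wedge\R^4$, and then use two clean facts---linear independence of these three bivectors for a nondegenerate simplex, and injectivity of the (anti-)self-dual projection on $e\wedge\R^4$ in euclidean signature---to conclude. This is essentially the euclidean analogue of what the paper itself proves later as Lemma~\ref{lm-rec} for the lorentzian model (where the six bivectors $B_{ij}^\geom,\Hodge B_{ij}^\geom$ play the role your chiral parts do here). What you gain is a proof that does not rely on \cite{Kaminski2018}; what the paper gains is brevity by outsourcing the geometry. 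Both are sound; your version would slot naturally into the paper as a warm-up for Lemma~\ref{lm-rec}.
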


\begin{proof}
If $n_{ab},n_{a5}, n_{b5}$ are linearly dependent then the matrix $\tilde{G}_{ab5}$ defined in equation (301) from \cite{Kaminski2018} is degenerate, and lemma 28 from \cite{Kaminski2018} (in its version for euclidean signature) tells us that there exists at most one stationary point (a single vector geometry or a degenerate $4$-simplex). 
\end{proof}

For the case of euclidean EPRL just considered the integration is over the compact manifold, thus the nondegeneracy of the hessian was the only missing part of the asymptotic analysis. We will not consider euclidean case with $\gamma>1$ because it can be treated in an analogous way to the lorentzian case. We will  now describe the lorentzian case in detail.

\section{Extension to the lorentzian EPRL amplitude}\label{sec:extension}

In the case of the lorentzian EPRL amplitude, integration is over many more variables and the hessian is more complicated. The action is a sum\footnote{We will use the notation from \cite{Kaminski2018}. A summary of notation and conventions is also in Appendix  \ref{sec:appendix}.}
\begin{equation}
\tilde{S}(\{g_i\},\{\z_{ij}\})=\sum_{1\leq i<j\leq 5} \tilde{S}_{ij}(g_i,g_j,\z_{ij},\z_{ji}),
\end{equation}
where
\begin{equation}
\tilde{S}_{ij}(g_i,g_j,\z_{ij},\z_{ji})=S_{ij}^{n_{ij}}(g_i^{-1}\z_{ij}) +S^\beta_{ij}(\z_{ij},\z_{ji})+S_{ji}^{n_{ji}}(g_j^{-1}\z_{ji}).
\end{equation}
Actions as well as measure factors are at least locally analytic. If we denote by $[\z_{ij}]$ elements of $\CP$ (i.e., equivalence classes of spinors) then the stationary points are discrete and we are interested in one of them
\begin{equation}
g_i^0,\ [\z_{ij}^0],
\end{equation}
where $g_5^0=1$. We will denote bivectors (see Section \ref{symp:G} and Appendix \ref{sec:appendix} for notation\footnote{It differs slightly from \cite{Kaminski2018} due to other normalization constants in the scalar product and some sign factors.})
\begin{equation}
B_{ij}^0=\delta_{\z_{ij}} S^\beta_{ij},
\end{equation}
and we will write $B_{ij}^\zero=g_i^{-1}B_{ij}^0$ for a bivector in the node frame. We will call it the fundamental stationary point.

\subsection{Reduced action}\label{sec:reduced}

The variables $\{g_i\}$ appear in many places, but for fixed $ij$ the variables $\z_{ij}$ and $\z_{ji}$ $\in$ $\CP$ are only found in the action $\tilde{S}_{ij}$. Let us denote the form of second derivatives with respect to the $\CP$ variables by $H_{\z\z}$. It is block diagonal, with blocks corresponding to $\{\z_{ij},\z_{ji}\}$. We will show later that this form is nondegenerate (in the neighbourhood of the fundamental stationary point).

Let us (locally) analytically extend the action in the $\z$ variables to the complexification $\CP^\C$, \footnote{We regard $\SL$ and $\CP$ as real manifolds. Complexification of a space $M$ that is already a complex manifold gives a space $M\times \overline{M}$}
\begin{equation}
\tilde{S}(\{g_i\},\{\z_{ij}^\C\}).
\end{equation}
Let us notice that $\tilde{S}$ depends only on the $\CP$ variables $[\z_{ij}^\C]$ (equivalence classes of spinors).
As the hessian $H_{\z\z}$ is nondegenerate at the fundamental stationary point we can (in the neighbourhood of $g_i^0$) find a unique (in the neighbourhood of $[\z_{ij}]^0$) solution
\begin{equation}
\label{eq:statz}
[\z_{ij}^\C]\colon \forall_{ij} \frac{\partial \tilde{S}}{\partial [\z_{ij}^\C]}=0.
\end{equation}
Here $\frac{\partial \tilde{S}}{\partial [\z_{ij}^\C]}$ is a holomorphic derivative as the antiholomorphic one gives $\frac{\partial \tilde{S}}{\partial \overline{[\z_{ij}^\C]}}=0$ everywhere. Let us notice that due to the form of the action the solution has a specific dependence on $\{g_i\}$
\begin{equation}
[\z_{ij}^\C](g_i,g_j).
\end{equation}
Let us introduce a reduced action
\begin{equation}
S^{red}(\{g_i\})=\sum_{i<j} S_{ij}^{red}(g_i,g_j),\ \ S_{ij}^{red}(g_i,g_j)=\tilde{S}_{ij}(g_i,g_j,[\z_{ij}^\C](g_i,g_j),[\z_{ji}^\C](g_j,g_i)).
\end{equation}
The point $g_i^0$ is a stationary point of this action and the hessian at this point is $H^{red}=\sum H_{ij}^{red}$. Let us notice that $S^{red}_{ij}$ depends only on the group element  $g_{ij}=g_j^{-1}g_i$ 
\begin{equation}
S^{red}_{ij}(g_i,g_j)=S'_{ij}(g_{ij}).
\end{equation}
We have the projection map on the complexified tangent space 
\begin{equation}
\Pi\colon T^\C_{\{g_i^0,[\z_{ij}^0]\}}\left(\prod_i \SL\times \prod_{i\not=j}\CP\right)\rightarrow T^\C_{\{g_i^0\}}\left(\prod_i \SL\right).
\end{equation}
We can also introduce a cross section
\begin{align}
&\Xi\colon  T^\C_{\{g_i^0\}}(\prod_i \SL)\rightarrow T^\C_{\{g_i^0,[\z_{ij}^0]\}}(\prod_i \SL\times \prod_{i\not=j}\CP),\\
&\Xi(V)=V+\sum_{ij} V\z_{ij}^\C(g_i,g_j)\frac{\partial}{\partial [\z_{ij}^\C]}+V\overline{\z_{ij}^\C}(g_i,g_j)\frac{\partial}{\partial [\overline{\z_{ij}^\C}]}.
\end{align}
We also use these maps restricted to fixed $ij$ sectors ($\Pi_{ij}$ and $\Xi_{ij}$).

\begin{lm}
The following holds:
\begin{equation}
H^{red}_{ij}(\Pi_{ij}(W_{ij}),V_{ij})=H_{ij}(W_{ij},\Xi_{ij}(V_{ij})).
\end{equation}
Also
\begin{equation}
H^{red}(\Pi(W),V)=H(W,\Xi(V)).
\end{equation}
\end{lm}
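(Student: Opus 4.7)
The plan is to expand both sides using the chain rule and exploit two cancellation principles: (i) the defining property of $[\z^\C_{ij}](g_i,g_j)$, namely that $\partial\tilde S_{ij}/\partial[\z^\C_{ij}]\equiv 0$ along the graph $g\mapsto(g,[\z^\C](g))$; and (ii) the holomorphicity remarked below \eqref{eq:statz}, i.e.\ $\partial\tilde S/\partial\overline{[\z^\C]}\equiv 0$, which forces every second partial derivative of $\tilde S_{ij}$ with at least one $\overline{[\z^\C]}$ slot to vanish. Differentiating (i) in a $g$-direction $X$ and using (ii) to discard the antiholomorphic term produces the key chain-rule identity
\begin{equation*}
\tilde S_{g\z}(X,\,\cdot\,)+\tilde S_{\z\z}(\,\cdot\,,X[\z^\C])=0,
\end{equation*}
where $\tilde S_{gg}$, $\tilde S_{g\z}$, $\tilde S_{\z\z}$ denote the block-Hessians of $\tilde S_{ij}$ at the fundamental stationary point and $X[\z^\C]$ is the directional derivative of $[\z^\C_{ij}](g)$.

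For the left-hand side I would write $S^{red}_{ij}(g)=\tilde S_{ij}(g,[\z^\C](g),\overline{[\z^\C]}(g))$ and take a first directional derivative $V\,S^{red}_{ij}$: it collapses to the pure $g$-derivative of $\tilde S_{ij}$ along the graph, since the $\partial/\partial[\z^\C]$ chain-rule term drops by (i) and the $\partial/\partial\overline{[\z^\C]}$ term drops by (ii). Differentiating a second time in $W_g=\Pi_{ij}(W_{ij})$ and applying the chain rule yields
\begin{equation*}
H^{red}_{ij}(W_g,V)=\tilde S_{gg}(W_g,V)+\tilde S_{g\z}(V,W_g[\z^\C]).
\end{equation*}

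For the right-hand side, substituting the definition of $\Xi_{ij}(V)$ into $H_{ij}$ and discarding every block containing an $\overline{[\z^\C]}$ entry by (ii) leaves
\begin{equation*}
H_{ij}(W_{ij},\Xi_{ij}(V))=\tilde S_{gg}(W_g,V)+\tilde S_{g\z}(W_g,V[\z^\C])+\tilde S_{\z g}(W_\z,V)+\tilde S_{\z\z}(W_\z,V[\z^\C]),
\end{equation*}
where $W_\z$ is the $\CP$-part of $W_{ij}$. The last two terms are exactly $W_\z$ paired with the chain-rule identity evaluated at $X=V$, so they cancel. Matching the remaining two terms against $H^{red}_{ij}(W_g,V)$ reduces to the identity $\tilde S_{g\z}(W_g,V[\z^\C])=\tilde S_{g\z}(V,W_g[\z^\C])$, which follows by applying the chain-rule identity once with $X=W_g$ and once with $X=V$ and using the symmetry of $\tilde S_{\z\z}$.

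The collective statement $H^{red}(\Pi(W),V)=H(W,\Xi(V))$ then follows by summing the $ij$-version over pairs $i<j$, since $\tilde S=\sum\tilde S_{ij}$ and each $[\z^\C_{ij}]$ depends only on $(g_i,g_j)$, so both $H$ and $H^{red}$ are sums of their $ij$ blocks and $\Xi$ is assembled from the $\Xi_{ij}$ by definition. The main obstacle is purely bookkeeping — keeping track of which partial is holomorphic versus antiholomorphic in the tangent space of the complexification $\CP^\C$ — but once (i) and (ii) are invoked systematically, both sides collapse to the same two-term expression.
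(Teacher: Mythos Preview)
Your proof is correct and uses the same ingredients as the paper --- stationarity of $\z^\C$, holomorphicity of the extension, and the chain rule --- just expanded into $gg$/$g\z$/$\z\z$ blocks rather than kept at the level of first-derivative functions. The paper's packaging is slightly more economical (it establishes $(W_{ij}\tilde S_{ij})|_{(g,\z^\C(g))}=\Pi_{ij}(W_{ij})S^{red}_{ij}$ as a function identity in $g$ and then applies $V_{ij}$ once, which by chain rule acts as $\Xi_{ij}(V_{ij})$ on $W_{ij}\tilde S_{ij}$), thereby sidestepping your final symmetry step reconciling $\tilde S_{g\z}(W_g,V[\z^\C])$ with $\tilde S_{g\z}(V,W_g[\z^\C])$; but the two arguments are the same computation.
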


\begin{proof}
Due to the condition \eqref{eq:statz} on $\z^\C$ we have for $W_{ij}\in (T\SL)^2\times (T\CP)^2$
\begin{align}
&W_{ij}\left(\Im S_{ij}(g_i,g_j,\z_{ij}^\C(g_i,g_j), \z_{ji}^\C(g_i,g_j))\right)=\\
&=\left(\Pi_{ij}(W_{ij})\Im S_{ij}\right)(g_i,g_j,\z_{ij}^\C(g_i,g_j), \z_{ji}^\C(g_i,g_j)).
\end{align}
Let us notice that for $V_{ij}\in T\SL^2$
\begin{align}
&V_{ij}\left((\Pi_{ij}(W_{ij})\Im S_{ij})(g_i,g_j,\z_{ij}^\C(g_i,g_j), \z_{ji}^\C(g_i,g_j))\right)=\\
&=V_{ij}\left(W_{ij}\Im S_{ij}\right)(g_i,g_j,\z_{ij}^\C(g_i,g_j), \z_{ji}^\C(g_i,g_j))=\\
&=\left(\Xi_{ij}(V_{ij})W_{ij}\Im S_{ij}\right)(g_i,g_j,\z_{ij}^\C(g_i,g_j), \z_{ji}^\C(g_i,g_j)).
\end{align}
Thus $H^{red}_{ij}(V_{ij},\Pi_{ij}(W_{ij}))=H_{ij}(\Xi_{ij}(V_{ij}),W_{ij})$. Summing over $ij$ we get also the second equality.
\end{proof}

\begin{lm}
The hessian is degenerate if and only if the reduced hessian is.
\end{lm}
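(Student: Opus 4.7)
The plan is to use the previous lemma's identity $H^{red}(\Pi(W),V) = H(W,\Xi(V))$ together with the stated nondegeneracy of the $\z\z$-block $H_{\z\z}$, and the obvious fact that $\Pi\circ\Xi = \mathrm{id}$ on the $\SL$-tangent directions.

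First I would handle the easy direction: suppose $H^{red}$ is degenerate, so there is some $V\neq 0$ with $H^{red}(\cdot,V)=0$. Then the candidate null vector for $H$ is $\Xi(V)$. Nontriviality is immediate from $\Pi(\Xi(V))=V\neq 0$. For any $W$ in the full tangent space,
\begin{equation}
H(W,\Xi(V)) = H^{red}(\Pi(W),V) = 0,
\end{equation}
so $\Xi(V)$ annihilates $H$ and $H$ is degenerate.

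For the converse, suppose $H$ is degenerate and pick a nonzero null vector $W$ for $H$. The candidate null vector for $H^{red}$ is $\Pi(W)$, so the only thing to check is that $\Pi(W)\neq 0$. This is where the nondegeneracy of $H_{\z\z}$ enters: if $\Pi(W)=0$, then $W$ lies purely in the $\z$-directions (i.e.\ in the kernel of $\Pi$), and then for any other $\z$-direction vector $W'$ we have $H(W,W')=H_{\z\z}(W,W')=0$, contradicting the nondegeneracy of $H_{\z\z}$ that is assumed near the fundamental stationary point. Hence $\Pi(W)\neq 0$, and for any $V$ in the $\SL$-tangent space,
\begin{equation}
H^{red}(\Pi(W),V) = H(W,\Xi(V)) = 0,
\end{equation}
so $\Pi(W)$ annihilates $H^{red}$.

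The only subtle point is the use of the nondegeneracy of $H_{\z\z}$ in the converse direction; this is advertised in the text as being established later, so I would simply invoke it. Beyond that, the proof is a formal manipulation of the identity proved in the preceding lemma together with $\Pi\circ\Xi=\mathrm{id}$, and no further computation is needed.
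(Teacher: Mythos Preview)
Your proof is correct and follows essentially the same approach as the paper's, using the identity $H^{red}(\Pi(W),V)=H(W,\Xi(V))$ from the preceding lemma in both directions together with $\Pi\circ\Xi=\mathrm{id}$. In fact you are more careful than the paper: the paper's own proof does not explicitly verify that $\Pi(V)\neq 0$ when $HV=0$, whereas you correctly invoke the nondegeneracy of $H_{\z\z}$ to rule out a null vector lying purely in the $\z$-directions.
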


\begin{proof}
Let suppose that $HV=0$, then for any $W$
\begin{equation}
0=H(\Xi(W),V)=H^{red}(W,\Pi(V)),
\end{equation}
thus $H^{red}\Pi(V)=0$. The other way around, if $H^{red}W=0$ then for any $V$
\begin{equation}
H(\Xi(W),V)=H^{red}(W,\Pi(V))=0,
\end{equation}
thus $H\Xi(W)=0$.
\end{proof}

\begin{df}
An \emph{extremal point} of the action $S$ is a point on the real manifold where $\partial \Im S=0$ and the tensor of second derivatives of $\Im S$ is nonnegative definite.
\end{df}

If the action $S$ satisfies the reality condition \eqref{eq:reality} ($\Im S\geq 0$) then points on the real manifold where $\Im S=0$ are extremal. The fundamental stationary point $\{g_i^0,[\z_{ij}]^0\}$ is extremal for the actions $\tilde{S}_{ij}$. The following is a consequence:

\begin{lm}
The hermitean form
\begin{equation}
\Im H_{ij}^{red}(\overline{V_{ij}},V_{ij})
\end{equation}
is nonnegative definite.
\end{lm}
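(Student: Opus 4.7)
The plan is to reduce the claim to the extremality of the fundamental stationary point for the full action $\tilde{S}_{ij}$, using the relation between $H^{red}_{ij}$ and $H_{ij}$ just established. First, I would substitute $W_{ij} = \Xi_{ij}(W'_{ij})$ into the identity $H^{red}_{ij}(\Pi_{ij}(W_{ij}), V_{ij}) = H_{ij}(W_{ij}, \Xi_{ij}(V_{ij}))$. Since $\Pi_{ij}\circ\Xi_{ij}$ is the identity on $T^\C_{\{g_i^0\}}(\prod_i \SL)$, this yields the symmetric identity
\begin{equation}
H^{red}_{ij}(W'_{ij}, V_{ij}) = H_{ij}\bigl(\Xi_{ij}(W'_{ij}), \Xi_{ij}(V_{ij})\bigr).
\end{equation}

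Next I would check that $\Xi_{ij}$ intertwines complex conjugation, i.e.\ $\overline{\Xi_{ij}(V)} = \Xi_{ij}(\overline{V})$. This is immediate from the definition of $\Xi_{ij}$: the function $\overline{\z_{ij}^\C}(g_i,g_j)$ is the pointwise complex conjugate of $\z_{ij}^\C(g_i,g_j)$ on the real slice, so the two correction terms in $\Xi_{ij}(V)$ swap under conjugation into exactly those of $\Xi_{ij}(\overline{V})$. Setting $W'_{ij} = \overline{V_{ij}}$ and taking imaginary parts then gives
\begin{equation}
\Im H^{red}_{ij}(\overline{V_{ij}}, V_{ij}) = \Im H_{ij}\bigl(\overline{\Xi_{ij}(V_{ij})},\, \Xi_{ij}(V_{ij})\bigr).
\end{equation}

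Finally I would invoke extremality. The fundamental stationary point is extremal for $\tilde{S}_{ij}$, so the Hessian of $\Im \tilde{S}_{ij}$ is a nonnegative real symmetric form on the real tangent space. Extending it by complex linearity, any complex vector $U = U_r + iU_a$ yields $\Im H_{ij}(\overline{U},U) = \Im H_{ij}(U_r,U_r) + \Im H_{ij}(U_a,U_a) \geq 0$, the cross terms cancelling by symmetry. Applying this with $U = \Xi_{ij}(V_{ij})$ closes the argument.

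The only delicate point is the conjugation compatibility $\overline{\Xi_{ij}(V)} = \Xi_{ij}(\overline{V})$, which is where one has to use that $\z_{ij}^\C$ and $\overline{\z_{ij}^\C}$ are genuinely complex conjugates along the real locus of $\prod_i \SL$; once that is in place the statement is a direct corollary of the preceding two lemmas together with the defining property of an extremal point.
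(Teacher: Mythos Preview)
Your proposal is correct and follows essentially the same route as the paper: both use $\Pi_{ij}\circ\Xi_{ij}=\mathrm{id}$ together with the identity $H^{red}_{ij}(\Pi_{ij}(W),V)=H_{ij}(W,\Xi_{ij}(V))$ to reduce to $\Im H_{ij}(\overline{\Xi_{ij}(V_{ij})},\Xi_{ij}(V_{ij}))$, then invoke conjugation compatibility of $\Xi_{ij}$ and nonnegativity of $\Im H_{ij}$. Your write-up is somewhat more explicit than the paper's about why $\overline{\Xi_{ij}(V)}=\Xi_{ij}(\overline{V})$ and why positivity of the real form implies $\Im H_{ij}(\overline{U},U)\geq 0$ for complex $U$, but these are elaborations rather than a different argument.
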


\begin{proof}
The maps $\Xi_{ij}$ and $\Pi_{ij}$ are compatible with complex conjugation thus
\begin{align}
&\Im H_{ij}^{red}(\overline{V_{ij}},V_{ij})=\Im H_{ij}^{red}(\overline{V_{ij}},\Pi_{ij}\Xi_{ij}(V_{ij}))=\\
&=\Im H_{ij}(\Xi_{ij}(\overline{V_{ij}}),\Xi_{ij}(V_{ij}))=
\Im H_{ij}(\overline{\Xi_{ij}(V_{ij}}),\Xi_{ij}(V_{ij}))\geq 0,
\end{align}
because the imaginary part of the hessian $H_{ij}$ is nonnegative definite.
\end{proof}

Let us summarize:

\begin{lm}
The point $\{g_i^0\}$ is an extremal point of $S_{ij}^{red}$.
\end{lm}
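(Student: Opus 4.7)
The plan is to verify directly the two defining conditions for $(g_i^0,g_j^0)$ to be an extremal point of $S_{ij}^{red}$: first, that $\partial \Im S_{ij}^{red}$ vanishes there, and second, that the real symmetric form of second derivatives of $\Im S_{ij}^{red}$ is nonnegative definite. Both will follow by transporting the corresponding properties of $\tilde{S}_{ij}$ at the fundamental stationary point through the reduction $S_{ij}^{red}(g_i,g_j)=\tilde{S}_{ij}(g_i,g_j,[\z_{ij}^\C](g_i,g_j),[\z_{ji}^\C](g_j,g_i))$ introduced in Section \ref{sec:reduced}.

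For the first-order condition I would compute $\partial S_{ij}^{red}$ via the chain rule. By uniqueness of the solution of \eqref{eq:statz} one has $[\z_{ij}^\C](g_i^0,g_j^0)=[\z_{ij}^0]$, so the evaluation lands on the fundamental stationary point of $\tilde{S}_{ij}$. The contributions from differentiating the implicit functions $[\z_{ij}^\C]$ and $\overline{[\z_{ij}^\C]}$ are multiplied by the holomorphic and antiholomorphic $\z$-derivatives of $\tilde{S}_{ij}$ at that point, which vanish: the holomorphic derivatives by \eqref{eq:statz}, the antiholomorphic ones identically as noted in Section \ref{sec:reduced}. What remains is the $(g_i,g_j)$-derivative of $\tilde{S}_{ij}$ at the fundamental point; taking the imaginary part and invoking the extremality of the fundamental point for $\tilde{S}_{ij}$ (recorded in the text preceding the lemma) gives zero.

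The second-order condition is immediate from the preceding lemma, which asserts $\Im H_{ij}^{red}(\overline{V_{ij}},V_{ij})\geq 0$ for every complex tangent vector $V_{ij}$. Specializing to a real vector $v$ (where $\overline{v}=v$) yields $\Im H_{ij}^{red}(v,v)\geq 0$, and this is precisely the value of the Hessian of $\Im S_{ij}^{red}$ on $v$, so the real symmetric form of second derivatives of $\Im S_{ij}^{red}$ is nonnegative definite.

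The only step requiring any care is the chain rule in the first condition, where one must track holomorphic versus antiholomorphic $\CP^\C$ derivatives and confirm that the evaluation of the complex implicit function $[\z_{ij}^\C]$ at $(g_i^0,g_j^0)$ really lands on the real fundamental stationary point. Given what has already been established in Section \ref{sec:reduced}, no substantive calculation is required beyond this bookkeeping.
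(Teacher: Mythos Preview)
Your proposal is correct and matches the paper's approach: the paper in fact gives no separate proof of this lemma, presenting it with the words ``Let us summarize:'' as an immediate consequence of the preceding lemma on nonnegativity of $\Im H_{ij}^{red}$ together with the already-noted extremality of the fundamental stationary point for each $\tilde S_{ij}$. Your write-up simply spells out the chain-rule bookkeeping for the first-order condition that the paper leaves implicit.
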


\section{Symplectic geometry}

We will adapt the theory of positive lagrangeans introduced in \cite{Hormander-pos}.
Let $\Omega$ be the symplectic form on $T^*M$. It is the inverse to the Poisson bracket
\begin{equation}
\Omega(v,\{D,\cdot\})=v(D),\quad v\in T(T^*M),\ D\in C^\infty(T^*M).
\end{equation}
Let us consider an analytic function $S\colon M\rightarrow \C$ (maybe defined only on an open set $U$).
The manifold
\begin{equation}
\LL_S=\{(x,p)\colon \theta=dS(x)\}\subset T^{*\C}M
\end{equation}
is lagrangean, that is it extends analytically to an analytic lagrangean submanifold of $T^*M^\C$ in some neighbourhood of the real $T^*M$. Here we denoted by $\theta$ the tautological form $\theta=p_\mu dx^\mu$.

Over real points of $M$ the complex conjugation of the tangent space of the lagrangean $\overline{T^\C\LL_S}$ is in itself the tangent space of the holomorphic lagrangean
\begin{equation}
\LL_{\bar{S}}=\{(x,p)\colon \theta=d\bar{S}(x)\}\subset T^{*\C}M.
\end{equation}
The tangent space of $\LL_S$ can be identified by projection $\pi\colon T^*M\rightarrow M$ with the tangent space of $M$. We will denote this map by $\Pi_S\colon T^\C\LL\rightarrow T^\C M$.

Now we will state and prove some important facts about extremal points:

\begin{lm}\label{lm-real}
The following holds for an extremal point $x_0$ of the action $S$
\begin{enumerate}
\item $p_0=dS(x_0)$ is real.
\item The hermitian form on $T^\C_{(x_0,p_0)}\LL$
\begin{equation}\label{eq:R}
\RR(v,v')=-\frac{i}{2}\Omega_{(x_0,p_0)}(\bar{v},v'),\quad v,v'\in T_{(x_0,p_0)}^\C\LL
\end{equation}
is nonnegative definite and
\begin{equation}
\RR(v,v')=\partial^2\Im S(\Pi_S\bar{v},\Pi_Sv').
\end{equation}
\item Let $w\in T_{x_0}M^\C$ and we denote $v=\Pi_S^{-1}w$ then
\begin{equation}
\RR v=0 \; (\text{that is: } (\partial^2\Im S)w=0\;)
\end{equation}
is equivalent to 
\begin{equation}
v\in T_{(x_0,p_0)}^\C\LL_S\cap \overline{T_{(x_0,p_0)}^\C\LL_S}.
\end{equation}
\end{enumerate}
\end{lm}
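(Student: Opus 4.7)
Part (1) would follow immediately from the definition: the extremal condition includes $d\Im S(x_0)=0$, so $p_0=dS(x_0)=d\Re S(x_0)$ is real. This also places $(x_0,p_0)$ simultaneously on $\LL_S$ and $\LL_{\bar S}$, which justifies the identification $\overline{T^\C_{(x_0,p_0)}\LL_S}=T^\C_{(x_0,p_0)}\LL_{\bar S}$ mentioned in the text, and allows us to speak of $\bar v\in T^\C\LL_{\bar S}$ for $v\in T^\C\LL_S$.

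For part (2), I would work in local canonical coordinates on $T^*M$ near $(x_0,p_0)$ and use that $\LL_S$ is the graph of $dS$. Tangent vectors in $T^\C_{(x_0,p_0)}\LL_S$ then take the form $v=(w,\partial^2 S(x_0)\cdot w)$ with $w=\Pi_S v$, and since $\bar v\in T^\C\LL_{\bar S}$ it has the corresponding form $(\bar w,\partial^2\bar S(x_0)\cdot\bar w)$. Plugging these into $\Omega$ and using the symmetry of the Hessian, a short calculation gives
\begin{equation}
\Omega(\bar v,v')=2i\,\partial^2\Im S(\bar w,w'),
\end{equation}
so that $\RR(v,v')=-\tfrac{i}{2}\Omega(\bar v,v')=\partial^2\Im S(\Pi_S\bar v,\Pi_S v')$. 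Nonnegativity of $\RR$ then reduces immediately to nonnegativity of $\partial^2\Im S$, which is part of the definition of an extremal point.

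For part (3), I would combine (2) with a direct description of the intersection. From the formula above, $\RR v=0$ is equivalent to $\partial^2\Im S(\bar w,\cdot)\equiv 0$ on $T^\C_{x_0}M$, and since the form is real-bilinear this coincides with $(\partial^2\Im S)w=0$, giving the parenthetical claim. On the other hand, $v=\Pi_S^{-1}w$ lies in $\overline{T^\C\LL_S}=T^\C\LL_{\bar S}$ iff its second component, which equals $\partial^2 S\cdot w$ by membership in $T^\C\LL_S$, also equals $\partial^2\bar S\cdot w$ by membership in $T^\C\LL_{\bar S}$; this is exactly $(\partial^2\Im S)w=0$. The two characterizations coincide.

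The main obstacle is not conceptual but purely bookkeeping: one has to pin down the sign convention for $\Omega$ consistent with the Poisson bracket relation $\Omega(v,\{D,\cdot\})=v(D)$ in order to produce the prefactor $-i/2$ with the correct sign in the identification $\RR(v,v')=\partial^2\Im S(\Pi_S\bar v,\Pi_S v')$, and one has to be careful with the complex-bilinear extension of the real form $\partial^2\Im S$ so that annihilation on the antiholomorphic slot of $\RR$ is genuinely equivalent to annihilation of $w$ by the underlying real quadratic form.
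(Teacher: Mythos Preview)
Your argument is correct and close to the paper's. For part~(2) the paper parametrizes tangent vectors to $\LL_S$ as Hamiltonian vector fields $V=f^\mu\{p_\mu-\partial_\mu S,\cdot\}$ and reads off $\Omega(\bar V,V')$ from the Poisson bracket $\{p_\mu-\partial_\mu\bar S,\,p_\nu-\partial_\nu S\}=2i\,\partial_\mu\partial_\nu\Im S$, whereas you parametrize them as graphs $(w,\partial^2S\cdot w)$ and evaluate $\Omega$ componentwise; these are two presentations of the same computation, and the Hamiltonian-vector-field form is what the paper later exploits when discussing functions vanishing on $\LL_S$. For part~(3) the paper takes a more intrinsic route: from $\RR V=0$ it concludes $\Omega(V,W)=0$ for all $W\in T^\C\LL_{\bar S}$, and then uses that a Lagrangian subspace equals its own symplectic orthogonal to place $V$ in $T^\C\LL_{\bar S}$. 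Your direct comparison of the fibre components $\partial^2S\cdot w=\partial^2\bar S\cdot w$ is an equivalent and slightly more elementary check, with the advantage that it transparently gives both directions of the equivalence at once.
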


\begin{proof}
At an extremal point $p=d\Re S$ because derivatives of imaginary parts vanish.
Let us use local coordinates $p_\mu,x^\mu$ on $T^*M$ then
\begin{equation}
\left\{p_\mu-\frac{\partial \bar{S}}{\partial x^\mu}, p_\nu-\frac{\partial S}{\partial x^\nu}\right\}=2i\frac{\partial^2\Im S}{\partial x^\mu\partial x^\nu}.
\end{equation}
Every vector tangent to $\LL$ can be written as
\begin{equation}
V=f^\mu\{p_\mu-\partial_\mu S,\cdot\},
\end{equation}
where $f^\mu$ are some complex constants. Thus at the point $(x_0,p_0)$
\begin{equation}
-\frac{i}{2}\Omega(\bar{V},V)=-\frac{i}{2}\overline{f^\mu}f^\nu\{p_\mu-\partial_\mu \bar{S},p_\nu-\partial_\nu S\}=\overline{f^\mu}f^\nu \partial_\mu\partial_\nu\Im S.
\end{equation}
From tensoriality of the second derivative at a point where $\partial \Im S=0$ we get
\begin{equation}
-\frac{i}{2}\Omega(\bar{V},V)=\partial^2\Im S(\overline{\Pi_SV},\Pi_SV),
\end{equation}
thus it is nonnegative definite. Let $V\in T^\C\LL_S$ be such that $\RR V=0$ then
\begin{equation}
\Omega(V,W)=0
\end{equation}
for all $W\in \overline{T^\C\LL_S}=T^\C\LL_{\overline{S}}$ thus $V\in T^\C\LL_{\overline{S}}$.
\end{proof}

We need some definition.

\begin{df}
We will say that the lagrangean $\LL$ at the real point $(x_0,p_0)$ is positive if
\begin{equation}
\RR_{(x_0,p_0)}(v,v')=-\frac{i}{2}\Omega(\bar{v},v')
\end{equation}
is nonnegative definite. We will say that it is strictly positive if additionally $\RR_{(x_0,p_0)}$ is nondegenerate (has no zero vectors).
\end{df}

A lagrangean is strictly positive if and only if
\begin{equation}
T_{(x_0,p_0)}^\C\LL\cap\overline{T_{(x_0,p_0)}^\C\LL}=\{0\},
\end{equation}
that is, the only real vector in $T^\C_{(x_0,p_0)}\LL$ is the trivial vector.

Let $f$ be an analytic function on $T^*M$ (it extends locally to $T^*M^\C$) that vanishes on $\LL_S$. The complex vector field
\begin{equation}
\{f,\cdot\}
\end{equation}
is tangent to $\LL_S$. If at the real point the lagrangean is positive then
\begin{equation}
-i\{\bar{f},f\}=-i\Omega(\{\bar{f},\cdot\},\{f,\cdot\})\geq 0.
\end{equation}

\subsection{Symplectic theory of $T^*\SL$}\label{symp:G}

The left invariant vector field $\Lv(L)$ of the Lie algebra element $L$ corresponds to the first order jets of $g\rightarrow ge^{tL}$. The right invariant vector field $\Rv(L)$ of the same Lie algebra element will be $g\rightarrow e^{-tL}g$ (the sign is necessary for proper commutation relations).

With every point of the cotangent bundle $T^*\SL$ we can associate a left and a right coalgebra element $p^L$ and $p^R$ given by the formula
\begin{equation}
\forall L\in so(1,3)\quad p^L(L)=\theta(\Lv(L)),\quad p^R(L)=\theta(\Rv(L)).
\end{equation}
Let us notice that at the base point $g$
\begin{equation}
p^R=-g^{-1}\cdot p^L,
\end{equation}
where $g$ acts on the coalgebra by the co-adjoint action (if we identify the coalgebra with bivectors using the scalar product then the coadjoint action is the same as the adjoint action, see appendix \ref{sec:appendix}).
For any Lie algebra element $L$,
\begin{equation}
p^L(L),\ p^R(L)
\end{equation}
are functions on $T^*\SL$. We have
\begin{align}\label{eq:comu}
&\{p^L(L),p^L(L')\}=-p^L([L,L']),\ \{p^R(L),p^R(L')\}=-p^R([L,L']),\\
& \{p^L(L),p^R(L')\}=0,
\end{align}
and also
\begin{equation}
\{f(g),p^L(L)\}=\Lv(L)f,\ \{f(g),p^R(L)\}=\Rv(L)f.
\end{equation}
Let us denote by $\delta^L S$ ($\delta^R S$) the covectors identified by with coalgebra as follows
\begin{equation}
\delta^LS(L)=\Lv(L)S,\quad \delta^RS(L)=\Rv(L)S. 
\end{equation}
We will use $\delta$ for the left version. We can use the standard scalar product $(\cdot,\cdot)$ on bivectors to make the further identification of $\delta S$ with a bivector.\footnote{This introduces additional factor of $2$ in comparison to \cite{Kaminski2018} (see section \ref{sec:appendix}).}

For any function $S$ on the group we can now define a lagrangean submanifold
\begin{equation}
\LL_S=\{\theta=d S\}=\{p^L=\delta S\}=\{p^R=\delta^R S\}.
\end{equation}

\subsection{Symplectic theory of a coadjoint orbit}\label{sec:coadjoint}

Let us recall that we can identify the space of bivectors (Lie algebra $so(1,3)=\Lambda^2\R^4$) with the coalgebra using the natural scalar product $(\cdot ,\cdot)$ on bivectors. Let us consider a coadjoint orbit
\begin{equation}\label{eq:X}
X_{n,\rho}=\left\{B\in \Lambda^2\R^4\colon (B,B)=\frac{1}{4}(n^2-\rho^2),\ (B,\Hodge B)=-\frac{1}{2}\rho n\right\},
\end{equation}
where $C_1=(B,B)$ and $C_2=(B,\Hodge B)$ are two Casimirs (invariants). The Lorentz group acts transitively on $X_{n,\rho}$. 
We have a natural Poisson bracket given, for a linear function $H(L)(B)=(B,L)$, by
\begin{equation}
\{H(L),H(L')\}=-H([L,L']).
\end{equation}
This turns the coadjoint orbits into symplectic manifolds.
Let us introduce an isomorphism from $so(1,3)$ to $sl(2,\C)$ (traceless matrices) by (see appendix \ref{sec:appendix})
\begin{equation}
B\rightarrow \M(B), \quad \M(v\wedge v')=\frac{1}{4}\left(\eta^-(v)\eta^+(v')-\eta^-(v')\eta^+(v)\right).
\end{equation}
We have identity 
\begin{equation}
-2\tr \M(B)^2=(B,B)-i(B,\Hodge B),
\end{equation}
thus for $B\in X_{n,\rho}$ 
\begin{equation}
\frac{1}{2}\tr \M(B)^2=\left(\frac{1}{4}(\rho-in)\right)^2.
\end{equation}
For the matrix $\M(B)$ there exist two spinors $\z^\pm_B$ (unique up to a constant each) such that
\begin{equation}
\M(B)\z^\pm_B=\pm \frac{1}{4}(\rho-in)\z_B^\pm.
\end{equation}
We can thus define a projection
\begin{equation}
\pi\colon X_{n,\rho}\rightarrow \CP,\ \pi(B)=[\z^+_B].
\end{equation}

\begin{df}
A function
\begin{equation}
S\colon U\subset \C^2\setminus\{0\}\rightarrow \C \text{ modulo } 2\pi
\end{equation}
is of type $(n,\rho)$ if
\begin{equation}\label{eq:S}
S(r e^{i\phi}\z)=S(\z)+\rho\ln r+n\phi.
\end{equation}
\end{df}

Usually we cannot define such actions globally. 
Let us introduce the notation\footnote{Our notation differs from \cite{Kaminski2018}.} (where the action is on Weyl spinors $\SP^+$, see appendix \ref{sec:appendix})
\begin{equation}
\delta_\z f(\z)=\delta_g^L f(g^{-1}\z)|_{g=1}.
\end{equation}
Let us notice that $\delta_\z S$ is a well defined function on $\CP$ if $S$ satisfies \eqref{eq:S}.

For a given $\z\in \C^2\setminus\{0\}$ we can consider a group $H_{[\z]}\subset \SL$ that preserves $[\z]\in \CP$. The Lie algebra of this group is given by
\begin{equation}
\Lie H_{[\z]}=\{B\colon [\z,\M(B)\z]=0\},
\end{equation}
where $[\u,\v]=\u^T\omega \v$ (see appendix \ref{sec:appendix}).
The subgroup that preserves $\z$ is $H^0_{[\z]}$
\begin{equation}
\Lie H^0_{[\z]}=\{B\colon \M(B)\z=0\}=\{B\colon \exists\lambda\in \C,\ \M(B)=\lambda\z\z^T\omega\}.
\end{equation}
Let us notice that $B\in {\Lie H_{[\z]}^0}^\perp$ is equivalent to
\begin{equation}
\forall\lambda\in\C\colon 0=\Re \tr\M(B)\lambda\z\z^T\omega=\Re \lambda[\z,\M(B)\z],
\end{equation}
thus to $B\in \Lie H_{[\z]}$. Moreover, the scalar product is null on $\Lie H^0_{[\z]}$. 

The functions of type $(0,0)$ are special as they can be pushed forward to $\CP$. For such $f$ we will denote $[f]_\CP$ such push forward, thus
\begin{equation}
f=[f]_\CP\pi.
\end{equation}
We have the action of $\SL$ on $\SP^+$ (and on $\CP$ respectively) generated by vector fields $\Lv_{\SP^+}(L)$ ($\Lv_\CP(L)$ respectively) for $L\in so(1,3)$. Vector fields $\Lv_{S^+}(L)$ (respectively $\Lv_\CP(L)$) correspond to the jet of the curves
\begin{equation}
t\rightarrow e^{-tL}\z,\quad (t\rightarrow [e^{-tL}\z]\text{ respectively}).
\end{equation}
Let us consider a map 
\begin{equation}
T^*_{[\z]}\CP\ni p\rightarrow \phi_{[\z]}(p)\in so(1,3),\quad \forall_L (\phi_{[\z]}(p),L)=p(\Lv_\CP(L))\text{ at point }[\z].
\end{equation}
Let us notice that for $f\in C^\infty(\CP)$ we have 
\begin{equation}
\delta_\z (f\pi)(\z)=\phi_{[\z]}(df).
\end{equation}

\begin{lm}
The map $\phi_{[\z]}$ is a bijection from $T^*_{[\z]}\CP$ to $\Lie H_{[\z]}^0$.
\end{lm}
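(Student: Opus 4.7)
My plan is to identify $\phi_{[\z]}$ with the orbit-stabilizer isomorphism coming from the transitive action of $\SL$ on $\CP$. I would introduce the infinitesimal action map
\begin{equation}
\alpha\colon so(1,3)\to T_{[\z]}\CP,\qquad \alpha(L)=\Lv_\CP(L)|_{[\z]},
\end{equation}
and record two of its properties. First, $\alpha$ is surjective by transitivity of the $\SL$-action on $\CP$. Second, $\ker\alpha=\Lie H_{[\z]}$: the tangent vector of $t\mapsto [e^{-tL}\z]$ at $t=0$ vanishes iff $\M(L)\z\in\C\z$, which by the paper's definition (using that $[\z,\v]=0$ iff $\v\in\C\z$ for nonzero $\z$) is precisely $L\in\Lie H_{[\z]}$. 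Hence $\alpha$ descends to a linear isomorphism $so(1,3)/\Lie H_{[\z]}\xrightarrow{\sim} T_{[\z]}\CP$.

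Dualizing, the defining relation $(\phi_{[\z]}(p),L)=p(\alpha(L))$ exhibits $\phi_{[\z]}(p)$ as the bivector representing the linear functional $p\circ\alpha\in so(1,3)^*$ through the nondegenerate scalar product on bivectors. Because $p\circ\alpha$ annihilates $\ker\alpha=\Lie H_{[\z]}$, the image of $\phi_{[\z]}$ must lie in $\Lie H_{[\z]}^\perp$, and the paragraph immediately preceding the lemma already identifies $\Lie H_{[\z]}^\perp$ with $\Lie H^0_{[\z]}$. Injectivity is then automatic: $\phi_{[\z]}(p)=0$ forces $p\circ\alpha=0$, and surjectivity of $\alpha$ gives $p=0$.

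Finally I would match dimensions. Both $T^*_{[\z]}\CP$ and $\Lie H^0_{[\z]}$ are real two-dimensional (the former because $\CP$ is a real $2$-manifold, the latter because $\{A\in sl(2,\C)\colon A\z=0\}$ is a complex line — at $\z=(1,0)^T$ the strictly upper-triangular generator, then propagated by $\SL$-equivariance). An injection between finite-dimensional vector spaces of equal dimension is automatically a bijection, giving the result. I do not foresee a real obstacle: the whole argument is orbit-stabilizer combined with duality. The only place that demands some care is the identification $\ker\alpha=\Lie H_{[\z]}$ rather than the smaller $\Lie H^0_{[\z]}$, which uses that $\Lv_\CP$ is a vector field on $\CP$, so its vanishing records stabilization of the class $[\z]$ rather than of a chosen representative $\z$.
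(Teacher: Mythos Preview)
Your argument is correct and follows essentially the same route as the paper: identify the image of $\phi_{[\z]}$ as $\Lie H_{[\z]}^\perp=\Lie H^0_{[\z]}$ via $\ker\alpha=\Lie H_{[\z]}$, and obtain injectivity from surjectivity of $\alpha$ (transitivity of the action). The only difference is that you spell out the dimension count to conclude surjectivity, whereas the paper leaves this step implicit.
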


\begin{proof}
Let us notice that $\Lv_\CP(L)([\z])=0$ if and only if $L\in \Lie H_{[\z]}$ thus $\phi_{[\z]}\in {\Lie H_{[\z]}}^\perp=\Lie H_{[\z]}^0$. As $so(1,3)_\CP$ span the whole tangent space at $[\z]$ we have also injectivity.
\end{proof}

\begin{lm}
For $S$ of type $(n,\rho)$ we have
\begin{equation}
\delta_\z S(\z)\in \Lie H_{[\z]}\cap X_{n,\rho}
\end{equation}
and $\pi(\delta_\z S(\z))=[\z]$.
\end{lm}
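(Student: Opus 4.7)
My plan is to first show that $B := \delta_\z S(\z)$ lies in $\Lie H_{[\z]}$ by pairing it against $\Lie H_{[\z]}^0$, and then identify the eigenvalue $\mu$ of $\M(B)$ on $\z$ via the type-$(n,\rho)$ transformation law, which will simultaneously give $\pi(B)=[\z]$ and match the two Casimirs defining $X_{n,\rho}$. Since the text has already established $(\Lie H_{[\z]}^0)^\perp = \Lie H_{[\z]}$, the first point reduces to checking $\delta_\z S(\z)(L)=0$ for every $L\in \Lie H_{[\z]}^0$. For such $L$ we have $\M(L)\z=0$, hence $\M(L)^k\z=0$ for all $k\geq1$, so $e^{-t\M(L)}\z=\z$ is constant in $t$ and
\begin{equation}
\delta_\z S(\z)(L)=\frac{d}{dt}\bigg|_{t=0}S(e^{-t\M(L)}\z)=0
\end{equation}
trivially. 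Thus $B\in\Lie H_{[\z]}$ and $\M(B)\z=\mu\z$ for some $\mu\in\C$.

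To pin down $\mu$, I pick $L\in\Lie H_{[\z]}$ with $\M(L)\z=\lambda\z$ for arbitrary $\lambda\in\C\setminus\{0\}$; such $L$ exist since the map $L\mapsto\lambda$ from $\Lie H_{[\z]}$ to $\C$ is surjective with kernel $\Lie H_{[\z]}^0$. Then $e^{-t\M(L)}\z=e^{-t\lambda}\z=e^{-t\Re\lambda}e^{-it\Im\lambda}\z$, and applying \eqref{eq:S} gives $S(e^{-t\lambda}\z)=S(\z)-t(\rho\Re\lambda+n\Im\lambda)$ modulo $2\pi$; differentiating,
\begin{equation}
(B,L)=-(\rho\Re\lambda+n\Im\lambda).
\end{equation}
Running the same computation with $L$ replaced by $\Hodge L$ (so that $\M(\Hodge L)=i\M(L)$ on Weyl spinors, replacing $\lambda$ by $i\lambda$) yields $(B,\Hodge L)=\rho\Im\lambda-n\Re\lambda$, and combining,
\begin{equation}
(B,L)-i(B,\Hodge L)=-(\rho-in)\lambda.
\end{equation}

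The final step compares this with the polarization of the identity $-2\tr\M(B)^2=(B,B)-i(B,\Hodge B)$, namely $-2\tr(\M(B)\M(L))=(B,L)-i(B,\Hodge L)$. Since $\M(B)$ and $\M(L)$ both preserve $\z$, both are upper triangular in any basis starting with $\z$, with diagonals $(\mu,-\mu)$ and $(\lambda,-\lambda)$ respectively; hence $\tr(\M(B)\M(L))=2\mu\lambda$. Equating and cancelling the arbitrary nonzero $\lambda$ yields $\mu=\frac{1}{4}(\rho-in)$, which is exactly the eigenvalue characterizing $\z=\z^+_B$, so $\pi(B)=[\z]$; and substituting back gives $-2\tr\M(B)^2=-4\mu^2=\frac{1}{4}(n^2-\rho^2)+\frac{i}{2}\rho n$, reproducing both Casimirs in the definition of $X_{n,\rho}$. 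The only real obstacle is the convention audit — verifying with the paper's definitions of $\M$ and $\Hodge$ that indeed $\M(\Hodge L)=+i\M(L)$ (rather than $-i$), since this sign is precisely what selects the $+$ branch $\pi(B)=[\z^+_B]$ rather than $[\z^-_B]$.
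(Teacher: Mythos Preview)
Your argument is correct and arrives at the same conclusion as the paper, but the mechanics differ somewhat. The paper works with explicit spinor parametrizations: it writes a generic $L\in\Lie H_{[\z]}$ as $\M(L)=\frac{1}{2}(\u\z^T+\z\u^T)\omega$ and $\M(\delta_\z S)=\frac{1}{2}(\v\z^T+\z\v^T)\omega$, then compares two expressions for the real pairing $(\delta_\z S,L)$ as $\u$ (hence $[\u,\z]\in\C$) varies, reading off $[\v,\z]=\frac{1}{2}(\rho-in)$ directly. You instead stay at the level of eigenvalues and use the Hodge dual to manufacture the complex combination $(B,L)-i(B,\Hodge L)$, matching it against the polarized trace identity. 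Your route avoids the spinor bookkeeping and makes the role of the complex structure $\Hodge\leftrightarrow i$ more transparent; the paper's route is more hands-on and does not require the auxiliary verification that $\M(\Hodge L)=+i\M(L)$ (which, as you note, is the one genuine convention check in your argument --- and it does hold, as one sees by polarizing $-2\tr\M(B)^2=(B,B)-i(B,\Hodge B)$ and setting $B'=\Hodge B$). Both are short and essentially equivalent.
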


\begin{proof}
For any $L\in \Lie H_{[\z]}^0$ we have $\delta_\z S(L)=0$ thus $\delta_\z S\in {\Lie H_{[\z]}^0}^\perp=\Lie H_{[\z]}$. Let us consider a traceless matrix (for some spinor $\u$)
\begin{equation}
{\mathbb N}=\frac{1}{2}(\u\z^T+\z\u^T)\omega,
\end{equation}
then ${\mathbb N}\z=\frac{1}{2}[u,\z]\z$. Let furthermore $\M(B)={\mathbb N}$, then we have
\begin{equation}
(\delta_\z S,B)=-\frac{1}{2}(\Re [u,\z])\rho-\frac{1}{2}(\Im [u,\z])n=-\Re \frac{1}{2}(\rho-in)[\u,\z].
\end{equation}
As $\M(\delta_\z S)\in \Lie H_{[\z]}$ we can write
\begin{equation}
\M(\delta_\z S)=\frac{1}{2}(\v\z^T+\z\v^T)\omega
\end{equation}
and
\begin{equation}
(\delta_\z S,B)=-2\tr \M(\delta_\z S){\mathbb N}=-[\v,\z][\u,\z],
\end{equation}
so $[\v,\z]=\frac{1}{2}(\rho-in)$ and as
\begin{equation}
\M(\delta_\z S)\z=\frac{1}{2}(\rho-in)\z,\quad2\tr \M(\delta_\z S)^2=[\v,\z]^2=\frac{1}{4}(\rho-in)^2,
\end{equation}
thus $\delta_\z S\in X_{n,\rho}$ and $\pi(\delta S)=[\z]$.
\end{proof}

\begin{lm}
For any real function $S$ of type $(n,\rho)$ the map
\begin{equation}
([\z],p)\rightarrow \delta_\z S+\phi_{[\z]}(p)
\end{equation}
is a symplectic diffeomorphism from $T^*\CP$ to $X_{n,\rho}$. This map is compatible with the projection onto $\CP$. 
\end{lm}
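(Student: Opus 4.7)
The plan is to verify, in order: (i) the formula $([\z],p)\mapsto \delta_\z S+\phi_{[\z]}(p)$ lands in $X_{n,\rho}$ and respects the projection to $\CP$; (ii) it is a bijection, hence a diffeomorphism by smoothness of the ingredients; and (iii) it intertwines the canonical symplectic form on $T^*\CP$ with the Kirillov--Kostant--Souriau form on $X_{n,\rho}$.

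For (i) and (ii), the key observation is that $\phi_{[\z]}(p)\in \Lie H^0_{[\z]}$ by the previous lemma, so $\M(\phi_{[\z]}(p))\z=0$. Combined with $\M(\delta_\z S)\z=\tfrac{1}{2}(\rho-in)\z$, this shows that $\z$ remains an eigenvector of $\M(\delta_\z S+\phi_{[\z]}(p))$ with the same eigenvalue; this pins down the two Casimirs, so the image lies in $X_{n,\rho}$, and it also gives $\pi$-image $[\z]$. Injectivity is then immediate: $\pi$ recovers $[\z]$, and inverting $\phi_{[\z]}$ recovers $p$. For surjectivity, given $B\in X_{n,\rho}$ with $\pi(B)=[\z]$, the difference $B-\delta_\z S$ kills $\z$ under $\M$ (both terms act by the same eigenvalue on $\z$), so it lies in $\Lie H^0_{[\z]}$ and is $\phi_{[\z]}(p)$ for a unique $p$. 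Smooth dependence of $\delta_\z S$ and $\phi_{[\z]}$ on $[\z]$ upgrades the bijection to a diffeomorphism.

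For (iii), the main step, I would test symplecticity on the family of linear functions $H(L)(B)=(B,L)$, $L\in so(1,3)$, whose KKS brackets $\{H(L),H(L')\}=-H([L,L'])$ generate the Poisson algebra on $X_{n,\rho}$. Their pullbacks split as $\psi^*H(L)=a_L+b_L$, where $a_L([\z])=\Lv_{\SP^+}(L)S$ is well defined on $\CP$ (if $S$ is of type $(n,\rho)$ then $\Lv_{\SP^+}(L)S$ is of type $(0,0)$, since the flow $e^{-tL}$ commutes with the rescaling $\z\mapsto re^{i\phi}\z$) and $b_L([\z],p)=p(\Lv_\CP(L))$ is the standard cotangent-lift momentum. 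A direct computation on $T^*\CP$ gives $\{a_L,a_{L'}\}=0$, $\{b_L,b_{L'}\}=-b_{[L,L']}$, and the sum of the two cross terms equals $-a_{[L,L']}$, where the last identity uses $[\Lv_{\SP^+}(L),\Lv_{\SP^+}(L')]=\Lv_{\SP^+}([L,L'])$ applied to $S$. Adding these reproduces $-\psi^*H([L,L'])$. Since the differentials of $\{H(L)\}_{L\in so(1,3)}$ span every cotangent fiber of $X_{n,\rho}$, this implies that $\psi$ pulls the KKS symplectic form back to the canonical one on $T^*\CP$.

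The only real obstacle is the sign bookkeeping in (iii): the cotangent-bundle convention on $T^*\CP$, the sign $-$ in the KKS bracket adopted by the paper, and the homomorphism-versus-anti-homomorphism property of the generators $\Lv_\CP$ and $\Lv_{\SP^+}$ (which here are homomorphisms, because the paper uses $e^{-tL}$ rather than $e^{tL}$ in the definition of the flow) all have to line up. Once they do, the symplecticity reduces to the routine bracket computation above, and everything else follows directly from the two preceding lemmas.
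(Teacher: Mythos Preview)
Your proposal is correct and follows essentially the same strategy as the paper: parts (i)--(ii) rest on the same two preceding lemmas, and part (iii) is exactly the paper's computation of the pullback of $H(L)$ as $[\Lv_{\SP^+}(L)S]_\CP+\theta(\Lv_\CP(L))$ followed by the three bracket identities and the spanning argument. The only cosmetic difference is that for (i) the paper chooses a local $f\in C^\infty(\CP)$ with $df([\z])=p$ and invokes the type-$(n,\rho)$ lemma for $S+f\pi$, whereas you argue directly via the eigenvalue of $\M$ on $\z$; both routes are equivalent.
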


\begin{proof}
Let us choose $f\in C^\infty(\CP)$ such that $df([\z])=p$ then
\begin{equation}
\delta_\z S+\phi_{[\z]}(p)=\delta_\z (S+f\pi)\in X_{n,\rho},
\end{equation}
as $S+f\pi$ is of type $(n,\rho)$. Moreover
\begin{equation}
\M(\delta_\z(S+f\pi))(\z)=\frac{1}{4}(\rho-in)\z+0\z,
\end{equation}
thus $\pi(\delta_\z (S+f\pi))=[\z]$, so it is compatible with the projection on $\CP$.
 
If $B\in X_{n,\rho}$ and $\pi(B)=[\z]$ then
 \begin{equation}
B-\delta_\z S(\z)\in \Lie H_{[\z]}^\perp=\Lie H_{[\z]}^0.
\end{equation}
However $\phi_{[\z]}$ is a bijection onto $\Lie H_{[\z]}^0$.
 
In order to check that it is a symplectomorphism we will show that Poisson brackets between generators of $so(1,3)$ are right. For $L\in so(1,3)$ let us consider the pull back of the Hamiltonian $H(L)$ to $T^*\CP$. It is
\begin{equation}
\delta_\z S(\z)(L)+\theta(L_\CP)(\z)=\Lv_{\SP^+}(S)(\z)+\theta(\Lv_\CP(L))(\z).
\end{equation}
Let us notice that $\Lv_{\SP^+}(S)$ descents to a function $[\Lv_{\SP^+}(S)]_\CP$ on $\CP$. We have thus for a given bivector a function on $T^*\CP$
\begin{equation}
H_\CP(L)=[\Lv_{\SP^+}(L)(S)]_\CP+\theta(\Lv_\CP(L)).
\end{equation}
Let us notice that
\begin{align}
&\{[\Lv_{\SP^+}(L)(S)]_\CP,[\Lv_{\SP^+}(L')(S)]_\CP\}=0,\\
&\{\theta(\Lv_\CP(L)),\theta(\Lv_\CP(L'))\}=-\theta(\Lv_\CP([L,L'])).
\end{align}
Moreover
\begin{equation}
\{[\Lv_{\SP^+}(L)(S)]_\CP,\theta(\Lv_\CP(L'))\}=[\Lv_{\SP^+}(L')\Lv_{\SP^+}(L)(S)]_\CP,
\end{equation}
thus
\begin{equation}
\begin{split}
\{[\Lv_{\SP^+}(L)(S)]_\CP,\theta(\Lv_\CP(L'))\}-&\{[\Lv_{\SP^+}(L')(S)]_\CP,\theta(\Lv_\CP(L))\}=\\
&\qquad\qquad=[\Lv_{\SP^+}([L',L])(S)]_\CP.
\end{split}
\end{equation}
Therefore finally
\begin{equation}
\{H(L),H(L')\}=-H([L,L']).
\end{equation}
Because the Hamiltonian vector fields of functions span in every point the whole tangent space, $\Omega$ is the same as the canonical symplectic form on the cotangent bundle.
\end{proof}

Let us now consider a complex action (locally defined) $S$ of type $(n,\rho)$. Let us notice that $\Im S$ is a function on $\CP$. In particular $\partial^2[\Im S]_\CP$ is a tensor on $\CP$.

\begin{lm}
The space
\begin{equation}
\LL_S'=\{\delta_\z S\colon [\z]\in U\}
\end{equation}
is a complex lagrangean manifold in $X_{n,\rho}$ and on the real point $B\in X_{n,\rho}$
\begin{equation}
\RR_{B}(v,v')=\partial^2[\Im S([\z])]_\CP(\overline{\pi(v)},\pi(v')),
\end{equation}
where $v\in T\LL_S'$ and $[\z]=\pi(B)$. 
\end{lm}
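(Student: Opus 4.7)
My plan is to reduce the statement to the preceding lemma together with Lemma~\ref{lm-real}. Locally, choose a real function $S_0$ of type $(n,\rho)$. Since $S - S_0$ is then of type $(0,0)$, it descends to a complex function $f$ on an open subset of $\CP$, so that $S = S_0 + f\circ\pi$ and, by linearity of $\delta_\z$,
\begin{equation}
\delta_\z S = \delta_\z S_0 + \phi_{[\z]}(df).
\end{equation}
Denote by $\Psi\colon T^*\CP \to X_{n,\rho}$ the symplectomorphism $([\z],p)\mapsto \delta_\z S_0 + \phi_{[\z]}(p)$ supplied by the preceding lemma; it is compatible with the two projections onto $\CP$.

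For the lagrangian property, I extend $\Psi$ analytically to the complexifications. It sends the graph of the holomorphic extension of $df$ in $T^{*\C}\CP$ bijectively onto $\LL_S'$. Since the graph of a holomorphic differential is a complex lagrangian submanifold of the complexified cotangent bundle and $\Psi$ is a (complex) symplectomorphism, $\LL_S'$ is a complex lagrangian in the complexification of $X_{n,\rho}$.

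For the formula for $\RR_B$, observe that at any real point $B \in \LL_S'$ with $[\z_0] = \pi(B)$ the covector $df([\z_0])$ must itself be real, so $d[\Im f]([\z_0]) = 0$ and the Hessian $\partial^2\Im f$ is a well-defined tensor at $[\z_0]$. The computation in the proof of Lemma~\ref{lm-real}, applied to $f$ on $\CP$ at the critical point $[\z_0]$, gives on the graph of $df$
\begin{equation}
\RR(v,v') = \partial^2\Im f\bigl(\overline{\Pi_f v},\Pi_f v'\bigr),
\end{equation}
where $\Pi_f$ is induced by the cotangent projection $T^*\CP \to \CP$. Transporting this through $\Psi$, which preserves the symplectic form and intertwines $\Pi_f$ with $\pi$, yields the same formula on $\LL_S'$ with $\Pi_f$ replaced by $\pi$. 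Since $S_0$ is real we have $\Im S = \Im f\circ\pi$, hence $[\Im S]_\CP = \Im f$, and the identity follows.

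The only real subtlety is the local existence of the decomposition $S = S_0 + f\circ\pi$; this relies on the local triviality of the bundle of functions of type $(n,\rho)$ and causes no difficulty in a sufficiently small neighborhood. The modulo-$2\pi$ ambiguity of $S$ is harmless because only $df$ and $\partial^2\Im f$ enter the argument, and the rest of the proof is routine transport through the symplectomorphism $\Psi$.
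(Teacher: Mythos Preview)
Your proof is correct and follows essentially the same route as the paper: choose a real auxiliary action of type $(n,\rho)$ (the paper calls it $S_{aux}$, you call it $S_0$), write $S=S_0+f\circ\pi$ with $f$ a function on $\CP$, and then use the symplectomorphism of the preceding lemma to identify $\LL_S'$ with the graph $\{\theta=d[f]_\CP\}$ in $T^*\CP$, so that the claim reduces to Lemma~\ref{lm-real} applied to $f$. Your write-up is simply more explicit about the invocation of Lemma~\ref{lm-real} and about why $d\Im f$ vanishes at the real point, but the argument is the same.
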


\noindent{\textit Remark:} We regard $\CP$ as a real manifold, thus $\pi(v)\in T^\C\CP$ and the conjugation is with respect to this additional complex structure. It can be translated into inner complex conjugation.

\begin{proof}
Let $S_{aux}$ be an auxiliary real action of type $(n,\rho)$. The difference $f=S-S_{aux}$ is a well defined function on $\CP$. Moreover using the local identification of $X$ with $\CP$ we have
\begin{equation}
\LL_S'=\{\theta=d[f]_\CP\}.
\end{equation}
Indeed this is equivalent to
\begin{equation}
B=\delta S_{aux}+\phi_{[\z]}(p)=\delta S_{aux}+\phi_{[\z]}(d[f]_\CP)=
\delta S_{aux}+\delta f=\delta S.
\end{equation}
We know that
\begin{equation}
\partial^2[\Im S]_\CP(\overline{\pi(v)},\pi(v'))=\partial^2[\Im (S-S_{aux})]_\CP(\overline{\pi(v)},\pi(v')=\RR(v,v'),
\end{equation}
thus the result.
\end{proof}

\subsection{Casimir reduction}

Let us consider a symplectic reduction of $T^*\SL$ with respect to Casimirs. 
For $\SL$ the moment map is nondegenerate except for bivectors equal to zero.

\begin{lm}
Two points $(g,p)$ and $(g',p')$ are connected by a flow of Casimirs in $\SL$ if and only if there exists $\lambda,\lambda'\in \R$ such that
\begin{equation}
g=g'e^{\lambda p+\lambda'\Hodge p}
\end{equation}
and $p^L={p'}^L$ (or equivalently $p^R={p'}^R$).
\end{lm}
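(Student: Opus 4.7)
The plan is to exploit the fact that $T^*\SL$ has exactly two independent Casimirs, $C_1=(p^L,p^L)$ and $C_2=(p^L,\Hodge p^L)$ (equivalently expressible in $p^R$ by Ad-invariance of the scalar product, which also verifies these are Casimirs). Their Hamiltonian flows automatically preserve $p^L$ and $p^R$, so the only nontrivial action is on the group coordinate $g$, and both directions of the lemma will follow once that action is identified explicitly.

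For the computation, I would use the bracket rule $\{f(g),p^L(L)\}=\Lv(L)f$ from \eqref{eq:comu}, combined with the observation that the coalgebra gradients of $C_1$ and $C_2$ at $p^L$ are $2p^L$ and $2\Hodge p^L$ respectively (the factor of two arising because both Casimirs are quadratic). This immediately gives
\begin{equation*}
X_{C_1}\cdot f(g)=\Lv(2p^L)f,\qquad X_{C_2}\cdot f(g)=\Lv(2\Hodge p^L)f.
\end{equation*}
Since $\{C_1,C_2\}=0$, the two flows commute, and the time-one flow of $\tfrac{\lambda}{2}C_1+\tfrac{\lambda'}{2}C_2$ sends $(g,p^L)$ to $(g\,e^{\lambda p^L+\lambda'\Hodge p^L},p^L)$. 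Reading this forwards proves the ``only if'' direction; reading it backwards produces, given the claimed form, the explicit Casimir flow connecting the two points.

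The remaining point is the parenthetical equivalence $p^L={p'}^L \Leftrightarrow p^R={p'}^R$ under the assumed relation on $g,g'$. Here I would invoke the key algebraic identity $[p^L,\Hodge p^L]=0$ in $so(1,3)$, which holds because $\Hodge$ is $SO(1,3)$-equivariant and therefore commutes with the Lie bracket, so that $[p^L,\Hodge p^L]=\Hodge[p^L,p^L]=0$. Consequently $e^{\lambda p^L+\lambda'\Hodge p^L}$ lies in the centraliser of $p^L$ under the adjoint representation, and substituting into $p^R=-g^{-1}\cdot p^L$ transports the equality of left-trivialised momenta into the equality of right-trivialised ones. There is no conceptual obstacle; the only mild pitfall is bookkeeping the factor of two in the Casimir gradients, which is why the natural flow parameters end up as $\lambda/2$ and $\lambda'/2$ rather than $\lambda$ and $\lambda'$.
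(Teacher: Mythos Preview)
Your proposal is correct and follows essentially the same approach as the paper: compute the Hamiltonian vector fields of the two Casimirs, observe that on the momentum they vanish while on the base they act as the left-invariant fields $2\Lv(p^L)$ and $2\Lv(\Hodge p^L)$, and integrate the commuting flows. You add two things the paper leaves implicit --- the bookkeeping of the factor $2$ in the flow parameters, and the verification of the parenthetical $p^L={p'}^L\Leftrightarrow p^R={p'}^R$ via $[p^L,\Hodge p^L]=\Hodge[p^L,p^L]=0$ --- both of which are fine.
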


\begin{proof}
Left covectors are preserved by Casimirs, thus we only need to find the vector field on the group. Let us denote the projection on the group manifolds of the Poisson vector fields of the Casimirs by $V_1$ and $V_2$.

We identify bivectors with the left covectors on $\SL$ by the scalar product and then
\begin{equation}
V_1=2\Lv(p^L),
\end{equation}
and thus $g$ is changed from the right (because left invariant vector field) by $p^L$.

The second Casimir is related to the first by Hodge star, thus
\begin{equation}
V_2=2\Lv(\Hodge p^L).
\end{equation}
Together (they commute) we have the flow 
\begin{equation}
g'=ge^{\lambda p^L+\lambda'\Hodge p^L}.
\end{equation}
From preservation of left covectors we have $p^L={p'}^L$.
\end{proof}

The symplectic reduction with respect to the Casimirs is given by
\begin{equation}
([g],B)\colon B\in X_{n,\rho},\quad [g]=[g'] \text{ if }\exists_{\lambda,\lambda'\in \R}
g=g'e^{\lambda B+\lambda'*B}.
\end{equation}
Let us denote
\begin{equation}
C_{n,\rho}=\{(g,B)\colon B\in X_{n,\rho}\}\subset T^*\SL.
\end{equation}
We have a map $\pi_{C_{n,\rho}}\colon C_{n,\rho}\rightarrow S$ to the symplectic reduction.

If $\LL'\subset S$ is a real lagrangean then
\begin{equation}
\pi^{-1}_{C_{n,\rho}}(\LL') 
\end{equation}
is also a lagrangean and it is a subset of $C_{n,\rho}$. The other way around, 
if a real lagrangean $\LL\subset T^*\SL$ is such that $\LL\subset C_{n,\rho}$, then as Casimir generated directions belong to $\LL$ we have
\begin{equation}
\LL=\pi^{-1}_{C_{n,\rho}}(\LL'),
\end{equation}
where $\LL'$ is a lagrangean in $S$.

The same holds for complex lagrangeans (in locally holomorphic extensions).

\subsubsection{Explicit description}

There is a direct description of this symplectic reduction that is an analog of Peter-Weyl theorem in group representation theory. Let us notice that the left and right invariant covectors Poisson commute with the Casimirs. Moreover the equation
\begin{equation}
p^R=-g^{-1}\cdot p^L
\end{equation}
has a solution for $g$ if $p^R$ and $p^L$ are of the same type (nonzero) and $g$ is unique up to $[\cdot]$ equivalence. Thus the map
\begin{equation}
([g],p)\rightarrow (p^L,p^R)\in X_{n,\rho}\times X_{n,\rho}
\end{equation}
is an isomorphism of symplectic spaces.\footnote{We used the fact that if $B\in X_{n,\rho}$ then also $-B\in X_{n,\rho}.$}

\subsection{Symplectic theory of $S_{ij}'$}

Let us consider an action
\begin{equation}
\tilde{S}_{ij}(g_i,g_j,\z_{ij},\z_{ji})=S_{ij}^{n_{ij}}(g_i^{-1}\z_{ij}) +S^\beta_{ij}(\z_{ij},\z_{ji})+S_{ji}^{n_{ji}}(g_j^{-1}\z_{ji}).
\end{equation}
Let us now assume that for every $ij$ the lagrangean
\begin{equation}
\LL_{ij}'=\LL_{S_{ij}^{n_{ij}}}'\subset X_{2\jj_{ij},\rho_{ij}}
\end{equation}
is strictly positive at the point corresponding to the fundamental stationary point (that is $[(g_i^0)^{-1}\z_{ij}^0]$). We will prove this fact in section \ref{sec:simplicity}.

Because the action $S^\beta$ is real, the imaginary part of the hessian with respect to $\z_{ij}$ and $\z_{ij}$ is block diagonal with respect to every $\z$ variable. From strict positivity of the lagrangean every block is strictly positive, thus by lemma \ref{lm:pos}, the form $H_{\z\z}$ is nondegenerate. 

We can now consider
\begin{equation}
S^{red}_{ij}(g_i,g_j)=S'_{ij}(g_{ij}).
\end{equation}
It is well defined for $g_{ij}$ in the neighbourhood of $g_{ij}^0$.

\begin{lm}
The lagrangean manifold of the action $S_{ij}'$ is given by
\begin{equation}
\LL_{S_{ij}'}=\pi_{C_{2\jj_{ij},\rho_{ij}}}^{-1}(\LL_{ij}'\times \LL_{ji}').
\end{equation}
\end{lm}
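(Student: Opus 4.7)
The plan is to reduce both sides to a common description via the Peter--Weyl-type symplectomorphism $([g],p)\mapsto(p^L,p^R)\in X_{2\jj_{ij},\rho_{ij}}\times X_{2\jj_{ij},\rho_{ij}}$ of the Casimir reduction described in the previous subsection, and to evaluate $p^L,p^R$ along $\LL_{S'_{ij}}$ using the envelope theorem.

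First, I would use the description of $T^*\SL$ recalled in Subsection~\ref{symp:G} to write $\LL_{S'_{ij}}$ as the set of points in $T^{*\C}\SL$ with $p^L=\delta^L S'_{ij}(g_{ij})$ and $p^R=\delta^R S'_{ij}(g_{ij})$. Since $S'_{ij}(g_j^{-1}g_i)=\tilde S_{ij}(g_i,g_j,\z^\C_{ij},\z^\C_{ji})$ is evaluated at the holomorphic stationary $\z^\C$, and since $\partial\tilde S_{ij}/\partial[\z^\C]=0$ holds there while $\partial\tilde S_{ij}/\partial\overline{[\z^\C]}=0$ holds everywhere, the envelope theorem lets me ignore the induced variation of $\z^\C$ when differentiating in $g_i$ and $g_j$. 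Only the explicit terms $S^{n_{ij}}_{ij}(g_i^{-1}\z^\C_{ij})$ and $S^{n_{ji}}_{ji}(g_j^{-1}\z^\C_{ji})$ contribute. A short calculation with $g_{ij}=g_j^{-1}g_i$ then identifies the left-invariant derivative in $g_i$ with the left-invariant derivative in $g_{ij}$, and the left-invariant derivative in $g_j$ with the right-invariant derivative in $g_{ij}$, yielding
\begin{equation*}
\delta^L S'_{ij}(g_{ij})=\delta_{g_i^{-1}\z^\C_{ij}}S^{n_{ij}}_{ij},\qquad \delta^R S'_{ij}(g_{ij})=\delta_{g_j^{-1}\z^\C_{ji}}S^{n_{ji}}_{ji}.
\end{equation*}

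Second, by the defining formula of $\LL'$ in Subsection~\ref{sec:coadjoint}, $\delta_\z S^{n_{ij}}_{ij}$ runs precisely through $\LL'_{ij}\subset X_{2\jj_{ij},\rho_{ij}}$ as $[\z]$ varies near $[(g_i^0)^{-1}\z_{ij}^0]$, and similarly for $ji$. This gives $p^L\in\LL'_{ij}$ and $p^R\in\LL'_{ji}$, so $\LL_{S'_{ij}}\subset C_{2\jj_{ij},\rho_{ij}}$ and $\pi_{C_{2\jj_{ij},\rho_{ij}}}(\LL_{S'_{ij}})\subset\LL'_{ij}\times\LL'_{ji}$. To upgrade this inclusion to equality I would invoke a dimension count: both sides are complex Lagrangean submanifolds of $T^{*\C}\SL$ of complex dimension $6$ (on the right side $\LL'_{ij}\times\LL'_{ji}$ contributes complex dimension $2+2$ and the Casimir fiber adds another $2$), so a Lagrangean contained in a Lagrangean of equal dimension must coincide with it in a neighbourhood of the fundamental point.

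The main obstacle in executing this plan will be the bookkeeping of left versus right invariant derivatives under the change of variables $(g_i,g_j)\mapsto g_{ij}=g_j^{-1}g_i$, together with a careful justification of the envelope step in the complex-analytic setting---in particular, ensuring that differentiating $S^{red}_{ij}$ in $g_i$ or $g_j$ really does ignore the induced variation of the stationary spinors $\z^\C$, and that the sign/normalisation conventions for $p^R=-g^{-1}\cdot p^L$ are consistent with the definition of $\LL'_{ji}$. Once those are settled, the remainder is a direct application of the structural results developed in the preceding subsections.
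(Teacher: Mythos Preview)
Your proposal is correct and follows essentially the same route as the paper: compute $\delta^L S'_{ij}$ and $\delta^R S'_{ij}$ via the envelope theorem (vanishing of the $\z$-derivatives at $\z^\C$), identify them with $\delta_\z S^{n_{ij}}_{ij}$ and $\delta_\z S^{n_{ji}}_{ji}$, conclude the inclusion $\pi_{C_{2\jj_{ij},\rho_{ij}}}(\LL_{S'_{ij}})\subset\LL'_{ij}\times\LL'_{ji}$, and finish by a dimension count. The only difference is cosmetic: the paper does not explicitly spell out the dimension $6=4+2$ breakdown, simply writing ``by comparing dimension it needs to be equal.''
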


\begin{proof}
Left and right invariant derivatives of $S_{ij}'$  are equal to derivatives of $S_{ij}^{n_{ij}}$ and, respectively, $S_{ji}^{n_{ji}}$ with spinors equal to the stationary point solutions $\z^\C$
\begin{equation}
\delta^L S_{ij}'=\delta_{g_i}^LS_{ij}^{red}=\delta_\z S_{ij}^{n_{ij}}(g_i^{-1}\z_{ij}^\C),\quad 
\delta^R S_{ij}'=\delta_{g_j}^LS_{ij}^{red}=\delta_\z S_{ji}^{n_{ij}}(g_j^{-1}\z_{ji}^\C),
\end{equation}
because derivatives with respect to $\z$ vanish in the point $[\z^\C](g_i,g_j)$. We see from the type of the actions that $\LL_{S_{ij}'}\subset C_{2\jj_{ij}\rho_{ij}}^\C$, thus it is an inverse image of a complex lagrangean in $X_{2\jj_{ij},\rho_{ij}}\times X_{2\jj_{ij},\rho_{ij}}$. We see also that
\begin{equation}
\pi_{C_{2\jj_{ij}\rho_{ij}}}(\LL_{S_{ij}'})\subset \LL_{ij}'\times \LL_{ji}',
\end{equation}
and by comparing dimension it needs to be equal.
\end{proof}

Let us denote 
\begin{equation}
B_{ij}^{\zero}=\delta_\z S_{ij}^{n_{ij}}((g_i^0)^{-1}\z_{ij}^0).
\end{equation}
Let us notice $B_{ij}^{\zero}=(g_i^0)^{-1}B_{ij}^0$.

\begin{lm}
If every lagrangean $\LL_{ij}'$ is strictly positive then if for $v\in so(1,3)$ 
\begin{equation}
(\partial^2\Im S_{ij}')v=0,
\end{equation}
then $v\in\{B_{ij}^\zero,\Hodge B_{ij}^{\zero}\}$.
\end{lm}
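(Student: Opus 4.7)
The plan is to combine three ingredients already available in the excerpt: the characterization of the kernel of $\partial^2\Im S$ in Lemma \ref{lm-real}(3); the description of the lagrangean of $S_{ij}'$ as a Casimir preimage from the preceding lemma; and the hypothesized strict positivity of the factor lagrangeans $\LL_{ij}'$ and $\LL_{ji}'$.

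First, I would lift $v \in so(1,3)$ (identified with $T_{g_{ij}^0}\SL$ via the left-invariant framing) to the unique $\tilde v = \Pi_{S_{ij}'}^{-1}(v) \in T^\C_{(g_{ij}^0,\,B_{ij}^\zero)}\LL_{S_{ij}'}$. Lemma \ref{lm-real}(3), applied to $S_{ij}'$ at its extremal point $g_{ij}^0$, translates the condition $(\partial^2\Im S_{ij}')v = 0$ into
\begin{equation}
\tilde v \in T^\C\LL_{S_{ij}'} \cap \overline{T^\C\LL_{S_{ij}'}}.
\end{equation}
Now I use the preceding lemma, $\LL_{S_{ij}'} = \pi_{C}^{-1}(\LL_{ij}' \times \LL_{ji}')$ with $C = C_{2\jj_{ij},\rho_{ij}}$, and push $\tilde v$ forward by $d\pi_C$. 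Since $\pi_C$ is real, $d\pi_C$ commutes with complex conjugation, so
\begin{equation}
d\pi_C(\tilde v) \in T^\C(\LL_{ij}'\times\LL_{ji}') \cap \overline{T^\C(\LL_{ij}'\times\LL_{ji}')} = \bigl(T^\C\LL_{ij}'\cap\overline{T^\C\LL_{ij}'}\bigr) \oplus \bigl(T^\C\LL_{ji}'\cap\overline{T^\C\LL_{ji}'}\bigr).
\end{equation}
By the strict positivity hypothesis each summand is trivial, so $d\pi_C(\tilde v)=0$ and $\tilde v \in \ker d\pi_C$.

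Finally, the kernel of $d\pi_C$ at a point of $C_{2\jj_{ij},\rho_{ij}}$ is spanned by the Hamiltonian vector fields of the two Casimirs $(p,p)$ and $(p,\Hodge p)$. The lemma in the Casimir reduction subsection computed the projections of these flows onto the group factor as $2\Lv(p^L)$ and $2\Lv(\Hodge p^L)$. Evaluating at the stationary point, where $p^L = B_{ij}^\zero$, and translating back through $\Pi_{S_{ij}'}$ and the left-invariant identification $T\SL \cong so(1,3)$, I get $v \in \Span_\R\{B_{ij}^\zero, \Hodge B_{ij}^\zero\}$, which is exactly the claim. The main point requiring attention is checking that the Casimir flow directions really do exhaust $\ker d\pi_C$ and are faithfully represented by $B_{ij}^\zero$ and $\Hodge B_{ij}^\zero$ under the chosen identifications; both facts are in effect established in Section \ref{symp:G} and the Casimir reduction subsection, so the remaining work is purely bookkeeping about conventions.
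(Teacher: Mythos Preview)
Your proof is correct and follows the same route as the paper's own argument: lift $v$ to $\tilde v\in T^\C\LL_{S_{ij}'}$ via $\Pi_{S_{ij}'}^{-1}$, use Lemma~\ref{lm-real}(3) to place $\tilde v$ in $T^\C\LL_{S_{ij}'}\cap\overline{T^\C\LL_{S_{ij}'}}$, push forward by the real map $\pi_{C_{2\jj_{ij},\rho_{ij}}}$ to land in the trivial intersection (by strict positivity of $\LL_{ij}'\times\LL_{ji}'$), and then identify $\ker d\pi_{C}$ with the Casimir flow directions, which project to $\Span\{B_{ij}^\zero,\Hodge B_{ij}^\zero\}$. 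The paper's proof is more compressed and leaves the invocation of Lemma~\ref{lm-real}(3) and the reality of $\pi_C$ implicit, whereas you make these steps explicit; otherwise the arguments coincide.
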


\begin{proof}
From the previous lemma 
\begin{equation}
\LL_{S_{ij}'}=\pi_{C_{2\jj_{ij},\rho_{ij}}}^{-1}(\LL_{ij}'\times\LL_{ji}').
\end{equation}
Let $V=\Pi_{S_{ij}'}^{-1}(v)$ be the lift of $v$ to $T\LL_{S_{ij}'}$, its image 
\begin{equation}
\pi_{C_{2\jj_{ij},\rho_{ij}}}(V)\in T(\LL_{ij}'\times\LL_{ji}')\cap \overline{T(\LL_{ij}'\times\LL_{ji}')}=\{0\}.
\end{equation}
Thus $V$ is in the space of the Casimirs' Poisson vector fields. Thus its projection onto the tangent space of the group
\begin{equation}
v\in \{B_{ij}^{\zero},\Hodge B_{ij}^{\zero}\}
\end{equation}
as stated.
\end{proof}

\section{Simplicity constraints}\label{sec:simplicity}

Our goal in this section is to show that $\LL_{ij}'$ is strictly positive at the extremal point coming from the fundamental stationary point. In fact it is a simple computation of a two dimensional matrix. However it is useful to describe this lagrangean (in the neighbourhood of this point). Let us notice that from the reality condition of the action we know that the lagrangean is positive.

\subsection{Conditions on the action}

Let us suppose that we have a function of the form
\begin{equation}
 G^N(\z)=f(\z)e^{iNS(\z)},
\end{equation}
defined and analytic for $\z\in U$.

We have an action of the group on spinors $\z$, thus we can also consider an operator
\begin{equation}
\hat{D}=\sum_{|I|\leq m} (-i)^{|I|}d_{|I|}^{I_1\cdots I_{|I|}}\Lv_{\SP^+}(L_{I_1})\cdots \Lv_{\SP^+}(L_{I_{|I|}}),
\end{equation}
where $L_{I}$ are Lie algebra basis.

We associate with this operator a symbol (a homogenous polynomial on the Lie coalgebra)
\begin{equation}
P_D(p)=\sum_{|I|=m} d_{m}^{I_1\cdots I_m}L_{I_1}(p)\cdots L_{I_m}(p),
\end{equation}
where $p$ are Lie coalgebra elements. Let us remind that we identify both Lie algebra and coalgebra with bivectors (thanks to the scalar product).

Let  $p(\lambda)$ be a polynomial of order $m$ with $m$-homogeneous coefficient $a_m$ such that for every $N$
\begin{equation}
\left(\hat{D}-p(N)\right)G^N(\z)=0.
\end{equation}
Then taking the leading term in the $N$ expansion, we get for any $\z$
\begin{equation}
P_D\left(\delta_\z S\right)=a_m.
\end{equation}

\subsection{Bivector decomposition}\label{sec:biv-decomp}

For the given normal $N^0_i$ (see \cite{Kaminski2018}) with the norm $c_i=|N_i^0|^2\in\{-1,1\}$ we can decompose the bivector $B$ as follows
\begin{equation}
B=\Hodge (v\wedge N^0_i)+ w\wedge N^0_i,
\end{equation}
where $v,w\in {N^0_i}^\perp$ and the two terms belong to
\begin{equation}
so({N^0_i}^\perp)\oplus \Hodge so({N^0_i}^\perp).
\end{equation}
We can now introduce maps
\begin{align}
\Li_i\colon so(1,3)\rightarrow {N^0_i}^\perp,\quad \Li_i(B)=v,\\
\Ki_i\colon so(1,3)\rightarrow {N^0_i}^\perp,\quad \Ki_i(B)=w.
\end{align}
They are explicitly given by
\begin{equation}
\Li_i(B)=c_iN^0_i\llcorner\Hodge B,\quad  \Ki_i(B)=-c_iN^0_i\llcorner B.
\end{equation}
We can identify $so({N^0_i}^\perp)$ with the vector space ${N^0_i}^\perp$ by the map $\Li_i$
\begin{equation}
[\Hodge (v\wedge N_i^0),\Hodge (v'\wedge N_i^0)]=\Hodge \left((v\times v')\wedge N_i^0\right),
\end{equation}
where $\times$ is defined by
\begin{equation}
v\times v'=\Hodge (v\wedge N_i^0)(v')=-\Hodge \left(v\wedge v'\wedge N_i^0\right).
\end{equation}
The Casimirs can be writen in terms of these vectors as follows
\begin{equation}\label{eq:Casimirs-LK}
C_1=(B,B)=-c_i\left(\Li_i^2-\Ki_i^2\right),\quad C_2=(B,\Hodge B)=-2c_i\Li_i\cdot\Ki_i.
\end{equation}
With the vector $v\in {N^0_i}^\perp$ we can associate two complex vectors $k_s^i(v)$ ($s=\pm 1$) given by the conditions:
\begin{enumerate}
\item $k_s^i(v)\cdot N_0^i=k_s^i(v)\cdot v=k_s^i(v)\cdot k_s^i(v)=0$.
\item The action of the vector on $k_s$
\begin{equation}
v\times k_s^i(v)= i s Ck_s^i(v),
\end{equation}
where $C=\sqrt{(\Hodge (v\wedge N_i^0),\Hodge (v\wedge N_i^0))}=\sqrt{-c_iv\cdot v}$.
\end{enumerate}
In the case of spacelike faces we choose $C>0$. In this situation vectors $k_{\pm 1}^i(v)$ are complex and we assume
\begin{equation}
k_{- 1}^i(v)=\overline{k_{1}^i(v)}.
\end{equation}
With the choice of signature $(+---)$ the hermitian form
$\overline{w}\cdot w$
on $\{N_i^0,v\}^\perp$ is negatively definite thus, $k_{1}^i(v)\cdot k_{-1}^i(v)<0$. We assume that $k_{1}^i(v)\cdot k_{-1}^i(v)=-1$, and this fixes vectors up to a phase.

\begin{lm}
We have
\begin{equation}
k_{1}^i(v)\times k_{-1}^i(v)=i\frac{c_i}{C} v.
\end{equation}
\end{lm}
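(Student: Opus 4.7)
The plan is to reduce the identity to a purely Lie-algebraic fact about the bracket $\times$ on ${N_i^0}^\perp$. The relation $[\Hodge(v\wedge N_i^0),\Hodge(v'\wedge N_i^0)]=\Hodge((v\times v')\wedge N_i^0)$ established in Section~\ref{sec:biv-decomp} tells us that $\Li_i$ is a Lie algebra isomorphism, so in particular $\times$ satisfies the Jacobi identity on the complexified space $({N_i^0}^\perp)^\C$.

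First I apply Jacobi to the triple $(v,k_1^i(v),k_{-1}^i(v))$. Using $v\times k_s^i(v)=isC\,k_s^i(v)$ together with the antisymmetry of $\times$, the two cyclic terms of the form $k_{\pm 1}^i\times(\cdots)$ reduce to $\pm iC(k_1^i\times k_{-1}^i)$ and cancel against each other, leaving
\begin{equation}
v\times\left(k_1^i(v)\times k_{-1}^i(v)\right)=0.
\end{equation}

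Second, I observe that on the three-dimensional complex space $({N_i^0}^\perp)^\C$ the operator $v\times(\cdot)$ is already diagonalised by $\{v,k_1^i(v),k_{-1}^i(v)\}$ with eigenvalues $0,iC,-iC$, and these three vectors are linearly independent (the two nonzero eigenvalues differ). Hence $\ker(v\times(\cdot))=\C v$, so we conclude $k_1^i(v)\times k_{-1}^i(v)=\alpha\,v$ for some $\alpha\in\C$.

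Finally I pin down $\alpha$ by a single scalar product. The trilinear map $(a,b,c)\mapsto(a\times b)\cdot c=-\Hodge(a\wedge b\wedge N_i^0)\cdot c$ is totally antisymmetric, hence invariant under the cyclic permutation $(v,k_1^i,k_{-1}^i)\mapsto(k_1^i,k_{-1}^i,v)$, so
\begin{equation}
(k_1^i\times k_{-1}^i)\cdot v=(v\times k_1^i)\cdot k_{-1}^i=iC\,(k_1^i\cdot k_{-1}^i)=-iC.
\end{equation}
Combined with $v\cdot v=-c_iC^2$ coming from the definition of $C$, this gives $\alpha=ic_i/C$ after using $c_i^2=1$. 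The computation itself is elementary; the one point that requires real care is the consistent sign-tracking through the Hodge dual and through the definition of $\times$, since reversing either convention flips the sign of the final constant.
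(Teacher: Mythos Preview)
Your proof is correct and follows essentially the same route as the paper: both establish $k_1^i(v)\times k_{-1}^i(v)=\alpha v$ and then determine $\alpha$ via the cyclic symmetry of the triple product $(a\times b)\cdot c$, using $v\times k_1^i=iC\,k_1^i$, $k_1^i\cdot k_{-1}^i=-1$, and $v\cdot v=-c_iC^2$. The only difference is in the first step: the paper simply asserts $k_1^i\times k_{-1}^i\in\C v$ (implicitly because $(k_1^i\times k_{-1}^i)\cdot k_{\pm1}^i=0$ by antisymmetry of the triple product forces the $k_{\mp1}^i$-components to vanish), whereas you take a slightly longer detour through the Jacobi identity and eigenspace analysis of $v\times(\cdot)$ --- valid, but more machinery than needed.
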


\begin{proof}
Let us notice that $k_{1}^i(v)\times k_{-1}^i(v)=\alpha v$ and
\begin{align}
&iCk_{-1}^i(v)\cdot k_{1}^i(v)=k_{-1}^i(v)\cdot (v\times k_{1}^i(v))=\\
&=-k_{-1}^i(v)\cdot ( k_{1}^i(v)\times v)=(k_{1}^i(v)\times k_{-1}^i(v))\cdot  v=
\alpha(v\cdot v).
\end{align}
Thus
\begin{equation}
k_{1}^i(v)\times k_{-1}^i(v)=i\frac{C(k_{-1}^i(v)\cdot k_{1}^i(v))}{v\cdot v} v,
\end{equation}
and substituting $v\cdot v=-c_iC^2$ we get the result.
\end{proof}

Let us notice that if a complex vector $w\in {N_i^0}^\perp$ satisfies $w\cdot w=w\cdot v=0$ then
\begin{equation}
w\in \Span\{k^i_1(v)\}\cup \Span\{k^i_{-1}(v)\}.
\end{equation}
We can regard $v\cdot \Li_i$ and $v\cdot \Ki_i$ as linear maps on bivectors, thus we can compute Poisson brackets. In order to do it we need to find the associated by (the scalar product) bivectors
\begin{equation}
v\cdot \Li_i(B)=(-c_i\Hodge (v\wedge N_i^0),B),\quad v\cdot \Ki_i(B)= (c_iv\wedge N_i^0,B),
\end{equation}
thus we get
\begin{align}
&\{v\cdot \Li_i,v'\cdot \Li_i\}=c_i(v\times v')\cdot \Li_i,\\
&\{v\cdot \Li_i,v'\cdot \Ki_i\}=c_i(v\times v')\cdot \Ki_i,\\
&\{v\cdot \Ki_i,v'\cdot \Ki_i\}=-c_i(v\times v')\cdot \Li_i.
\end{align}

\subsection{Simplicity constraints}\label{sec:simplicity}

The coherent states $\Phi^{n_{ij}}(\z_{ij})$ satisfies the following equations
\begin{enumerate}
\item \textit{Diagonal simplicity constraints}, that for fixed spins means that the values of the Casimir operators are related to twisted simplicity constraints\footnote{Quantisation of  the action of the Lie algebra element $L$ is $\hat{L}=\frac{1}{i}\Lv_{\SP^+}(L)$.}
\begin{equation}
\hat{C}_1=\frac{1}{4}(n^2-\rho^2-4),\ \hat{C}_2=-\frac{1}{2}n\rho,
\end{equation}
where $\rho=\gamma n$ and $n=2\jj_{ij}$.\footnote{Our convention differs  from \cite{Conrady2010} by a sign in $C_2$ that can be seen from \eqref{eq:Casimirs-LK}.}
\item \textit{Cross simplicity constraints}, that are implemented in the EPRL model by
\begin{align}
&\left(\gamma\hat{\Li}_i+\hat{\Ki}_i\right)^2=0,\\
&\left(\hat{\Li}_i-\gamma\hat{\Ki}_i\right)\cdot\left(\gamma\hat{\Li}_i+\hat{\Ki}_i\right)=0.
\end{align}
\item \textit{The coherent state condition} $k_{s_{ij}}(v_{ij})\cdot \hat{\Li}_i=0$, where $s_{ij}$ is fixed and $v_{ij}$ is constructed from $n_{ij}$.
\end{enumerate}
These conditions impose several conditions on $S_{ij}^{n_{ij}}$. We can describe them in terms of $\LL_{ij}'$. Namely $B\in \LL_{ij}'$ needs to satisfy
\begin{enumerate}
\item \textit{Diagonal simplicity constraints} $(B,B)=\frac{1}{4}(4\jj_{ij}^2-\rho_{ij}^2)$ and $(B,\Hodge B)=-\frac{1}{2}2\jj_{ij}\rho_{ij}$ that are satisfied because $B\in X_{2\jj_{ij},\rho_{ij}}$.
\item \textit{Cross simplicity constraints} 
\begin{align}
&\left(\gamma\Li_i+\Ki_i\right)^2=0,\\
&\left(\Li_i-\gamma\Ki_i\right)\cdot\left(\gamma\Li_i+\Ki_i\right)=0.
\end{align}
\item \textit{Coherent state condition} $k_{s_{ij}}(v_{ij})\cdot \Li_i=0$, where $s_{ij}$ is fixed and $v_{ij}$ is constructed from $n_{ij}$.
\end{enumerate}
In order to analyze the conatraints let us  introduce a twisting map
\begin{equation}
\tau\colon so(1,3)\rightarrow so(1,3),\quad \tau(B)=B+\gamma\Hodge B.
\end{equation}
We can compute
\begin{align}
(\tau(B),\tau(B))&=(1-\gamma^2)(B,B)+2\gamma (B,\Hodge B),\\
(\tau(B),\Hodge \tau(B))&=(1-\gamma^2)(B,\Hodge B)-2\gamma (B,B).
\end{align}
Similarly
\begin{equation}
\Li_i(\tau(B))=\Li_i(B)+\gamma\Ki_i(B),\quad \Ki_i(\tau(B))=\Ki_i(B)-\gamma\Li_i(B).
\end{equation}
Let us denote $B^\tau=\tau^{-1}(B)$ and $\Li_i^\tau(B)=\Li_i(B^\tau)$, $\Ki_i^\tau(B)=\Ki_i(B^\tau)$, then
\begin{equation}
\Li_i(B)=\Li_i^\tau(B)+\gamma\Ki_i^\tau(B),\quad \Ki_i(B)=\Ki_i^\tau(B)-\gamma\Li_i^\tau(B).
\end{equation}
The first two conditions mean
\begin{enumerate}
\item Diagonal simplicity conditions:
\begin{equation}
(B^\tau,B^\tau)=\jj_{ij}^2,\quad (B^\tau,\Hodge B^\tau)=0.
\end{equation}
\item Cross simplicity: $\Ki_i^\tau\in \Span\{k_{1}^i(\Li_i^\tau)\}\cup \Span\{k_{-1}^i(\Li_i^\tau)\}$.
\end{enumerate}
Thus we can write
\begin{equation}
B^\tau=\Hodge (v\wedge N_i^0)+\lambda' k_t^i(v)\wedge N_i^0
\end{equation}
and the Casimir conditions means that
\begin{equation}
-c_i|v|^2=\jj_{ij}^2.
\end{equation}
We are interested in the fundamental stationary point, and then $(B^\zero_{ij})^\tau=\Hodge v_{ij}\wedge N_i^0$. The space $\LL_{ij}'$ around this point is a manifold thus there is a choice $t_{ij}$ such that 
\begin{equation}
\Ki_i^\tau\in \Span\{k_{t_{ij}}^i(\Li_i^\tau)\}.
\end{equation}
We also have 
\begin{equation}
-i\{\overline{k_{t_{ij}}^i(v)}\cdot(\Ki_i+\gamma\Li_i), k_{t_{ij}}^i(v)\cdot(\Ki_i+\gamma\Li_i)\}=-t_{ij}\frac{v}{C}\cdot ((\gamma^2-1)\Li_i+2\gamma\Ki_i),
\end{equation}
and, from positivity of the lagrangean, the right hand side needs to be positive. Let us notice that
\begin{equation}
\frac{v}{C}\cdot ((\gamma^2-1)\Li_i+2\gamma\Ki_i)=
(1+\gamma^2)\frac{v}{C}\cdot (\gamma\Ki_i^\tau-\Li_i^\tau).
\end{equation}
As at $B=B^\zero_{ij}$ we have ($C=\jj_{ij}$)
\begin{equation}
v_{ij}\cdot \Li_i^\tau=|v_{ij}|^2=-c_iC^2,\quad  v_{ij}\cdot \Ki_i^\tau=0,
\end{equation}
we see that $t_{ij}=-c_i$.

Let us consider now coherent state condition $k_{s_{ij}}(v_{ij})\cdot \Li_i=0$. It means that
\begin{equation}
\Li_i=\lambda_1 v_{ij}+\lambda_2 k_{s_{ij}}(v_{ij}).
\end{equation}
However,
\begin{equation}
\Li_i^2=(\lambda_1 v_{ij}+\lambda_2 k_{s_{ij}}(v_{ij}))^2=\lambda_1^2|v_{ij}|^2.
\end{equation}
but $\Li_i^\tau\cdot\Ki_i^\tau=\Ki_i^\tau\cdot \Ki_i^\tau=0$, thus
\begin{equation}
\Li_i^2=(\Li_i^\tau+\gamma \Ki_i^\tau)^2=(\Li_i^\tau)^2=-c_i [-c_i((\Li_i^\tau)^2-(\Ki_i^\tau)^2)]=-c_i\jj_{ij}^2,
\end{equation}
and $\lambda_1 =\pm 1$. As the phase space point corresponding to the fundamental stationary point is in the lagrangean we have in the neighbourhood of this stationary point $\lambda_1=1$. We can now compute
\begin{equation}
-i\{k_{-s_{ij}}(v_{ij})\cdot \Li_i,k_{s_{ij}}(v_{ij})\cdot \Li_i\}=-s_{ij}\frac{v_{ij}}{C}\cdot \Li_i=-s_{ij}\frac{v_{ij}}{C}\cdot (\Li_i^\tau+\gamma \Ki_i^\tau),
\end{equation}
at the fundamental stationary point it is equal to $s_{ij}c_i\jj_{ij}$, thus $s_{ij}=c_i$. 

We can now describe tangent space to the lagrangean at $B_{ij}^\zero$. The conditions for bivectors to be tangent directions to $\LL_{ij}'$ is that
\begin{enumerate}
\item Tangency condition $(B, B_{ij}^\zero)=0$, $(B,\Hodge B_{ij}^\zero)=0$ (this is equivalent to $v_{ij}\cdot \Li_i(B)=v_{ij}\cdot \Ki_i(B)=0$ and also $v_{ij}\cdot \Li_i^\tau(B)=v_{ij}\cdot \Ki_i^\tau(B)=0$),
\item $k_{t_{ij}}(v_{ij})\cdot (\Ki_i(B)+\gamma \Li_i(B))=0$ (that is $k_{t_{ij}}(v_{ij})\cdot \Ki_i^\tau(B)=0$),
\item $k_{s_{ij}}(v_{ij})\cdot \Li_i(B)=0$.
\end{enumerate}
It is not hard to find all vectors satisfying these conditions. Every tangent bivector can be  uniquely described by a pair
\begin{equation}
B\rightarrow (\Li_i(B),\Ki_i^\tau(B)).
\end{equation}
The conditions on $B$ are
\begin{equation}
(\Li_i(B),\Ki_i^\tau(B))=(\lambda_s k_{s_{ij}}(v_{ij}),\lambda_tk_{t_{ij}}(v_{ij})),\quad \lambda_s,\lambda_t\in \C.
\end{equation}
We can now summarize

\begin{lm}
At the point $B_{ij}^\zero$ the lagrangean $\LL_{ij}'$ is strictly positive.
\end{lm}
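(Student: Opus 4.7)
The plan is to verify strict positivity directly via the criterion from Lemma~\ref{lm-real}: $\LL_{ij}'$ is strictly positive at $B_{ij}^\zero$ if and only if
\[
T^\C_{B_{ij}^\zero}\LL_{ij}'\cap \overline{T^\C_{B_{ij}^\zero}\LL_{ij}'}=\{0\}.
\]
The preceding subsection has already done the heavy lifting by providing an explicit parameterization of this tangent space by two complex numbers $(\lambda_s,\lambda_t)\in\C^2$ via
\[
B\longmapsto (\Li_i(B),\Ki_i^\tau(B))=(\lambda_s\, k_{s_{ij}}(v_{ij}),\ \lambda_t\, k_{t_{ij}}(v_{ij})),
\]
together with the identifications $s_{ij}=c_i$ and $t_{ij}=-c_i$ read off from the positivity calculations already carried out.

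First I would observe that $\Li_i$ and $\Ki_i^\tau$ are real-linear and intertwine bivector complex conjugation with vector complex conjugation on ${N_i^0}^\perp\otimes\C$. Hence if $B\in T^\C \LL_{ij}'$ has parameters $(\lambda_s,\lambda_t)$, then
\[
\Li_i(\bar B)=\bar\lambda_s\,\overline{k_{s_{ij}}(v_{ij})}=\bar\lambda_s\,k_{-s_{ij}}(v_{ij}),\qquad \Ki_i^\tau(\bar B)=\bar\lambda_t\,k_{-t_{ij}}(v_{ij}).
\]
For $\bar B$ to also lie in $T^\C \LL_{ij}'$, these two images must be proportional to $k_{s_{ij}}(v_{ij})$ and $k_{t_{ij}}(v_{ij})$ respectively; but $k_{+1}(v_{ij})$ and $k_{-1}(v_{ij})$ are linearly independent (they are nonzero with $k_1\cdot k_{-1}=-1$, so cannot be proportional), so necessarily $\bar\lambda_s=\bar\lambda_t=0$ and $B=0$. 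The intersection is thus trivial, which is precisely strict positivity.

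A more informative alternative would be to compute the Hermitian form $\RR$ in the natural basis associated with $(\lambda_s,\lambda_t)$. The two basis tangent vectors are the Hamiltonian vector fields of the coherent state constraint $f_1=k_{s_{ij}}(v_{ij})\cdot \Li_i$ and the cross simplicity constraint $f_2=k_{t_{ij}}(v_{ij})\cdot(\Ki_i+\gamma\Li_i)$, both of which vanish on $\LL_{ij}'$. The diagonal entries $-i\{\bar f_a,f_a\}$ are exactly the quantities computed at the end of the preceding subsection, and they are strictly positive once $s_{ij}=c_i$ and $t_{ij}=-c_i$ are inserted. Using the Poisson brackets from Section~\ref{sec:biv-decomp} together with $k_{-s_{ij}}=k_{t_{ij}}$ (since $s_{ij}=-t_{ij}$), the off-diagonal bracket $\{\bar f_1,f_2\}$ reduces to a multiple of $(k_{t_{ij}}\times k_{t_{ij}})\cdot(\Ki_i+\gamma\Li_i)=0$, so the $2\times 2$ matrix representing $\RR$ is diagonal with strictly positive entries.

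The main potential obstacle is purely bookkeeping: one has to keep track of the relations $\Li_i=\Li_i^\tau+\gamma\Ki_i^\tau$, $\Ki_i=\Ki_i^\tau-\gamma\Li_i^\tau$, the signs coming from $c_i,s_{ij},t_{ij}\in\{\pm 1\}$, and the convention $k_{-1}^i(v)=\overline{k_1^i(v)}$. Once these are in hand, the conclusion is essentially a one-line consequence of the linear independence of $k_{+1}(v_{ij})$ and $k_{-1}(v_{ij})$.
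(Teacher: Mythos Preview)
Your primary argument is correct and is essentially the paper's own proof: both use the parametrization $(\Li_i(B),\Ki_i^\tau(B))=(\lambda_s k_{s_{ij}},\lambda_t k_{t_{ij}})$, apply complex conjugation (which sends $k_{\pm 1}$ to $k_{\mp 1}$), and conclude from the linear independence of $k_{+1}$ and $k_{-1}$ that $\lambda_s=\lambda_t=0$. The only cosmetic difference is that the paper checks that a \emph{real} tangent vector must vanish, whereas you check the equivalent condition $T^\C\LL_{ij}'\cap\overline{T^\C\LL_{ij}'}=\{0\}$; the paper had already noted this equivalence right after the definition of strict positivity. Your second, alternative computation of $\RR$ as a diagonal $2\times2$ matrix with strictly positive entries is not in the paper but is a legitimate and slightly more informative variant, and your evaluation of the off-diagonal bracket via $k_{t_{ij}}\times k_{t_{ij}}=0$ is correct.
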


\begin{proof}
We need to prove that the real tangent vector (bivector) is zero. Tangent vectors satisfy
\begin{equation}
(\Li_i(B),\Ki_i^\tau(B))=(\lambda_s k_{s_{ij}}(v_{ij}),\lambda_tk_{t_{ij}}(v_{ij})),\quad \lambda_s,\lambda_t\in \C,
\end{equation}
and from reality
\begin{align}
&(\Li_i(B),\Ki_i^\tau(B))=(\Li_i(\bar{B}),\Ki_i^\tau(\bar{B}))=\\
&=(\overline{\lambda_s k_{s_{ij}}(v_{ij})},\overline{\lambda_tk_{t_{ij}}(v_{ij})})=(\overline{\lambda_s} k_{-s_{ij}}(v_{ij}),\overline{\lambda_t}k_{-t_{ij}}(v_{ij})).
\end{align}
However vectors $k_{\pm 1}^i(v_{ij})$ are linearly independent thus $\lambda_s=\lambda_t=0$.
\end{proof}

\section{Reduced hessian}\label{sec:reduced-hes}

Let us denote the tensor of second derivatives of $\Im S_{ij}(g_{ij}^{0})$ by $I_{ij}'$. We are interested in the second derivatives $\Im S^{red}_{ij}(g_i,g_j)$ at $\{g_k^{0}\}$ (we assume $g_5^{0}=1$).

The tangent vectors to the manifold $\prod_{i=1}^4 \SL$ are given by
\begin{equation}
V_{tot}=\{v\colon \{1,\ldots 5\}\rightarrow so(1,3)\colon v(5)=0\}.
\end{equation}
For convenience we assumed $v(5)=0$. We use here the right invariant vector fields to identify $V_{tot}$ with $T\left(\prod_{i=1}^4 \SL\right)$.

\begin{lm}
We have
\begin{equation}
I_{ij}(v,v'):=\partial^2\Im S_{ij}^{red}(\{g_k^{0}\})(v,v')= {I}'_{ij}(g_i^{-1}\cdot v(i)-g_i^{-1}\cdot v(j),g_i^{-1}\cdot v(i)-g_i^{-1}\cdot v(j)),
\end{equation}
where $I'_{ij}(L,L)=\partial^2 \Im S'_{ij}(\Lv(L),\Lv(L))$.
\end{lm}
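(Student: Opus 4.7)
The plan is to exploit that $S_{ij}^{red}(g_i,g_j) = S'_{ij}(g_j^{-1}g_i)$ depends only on the single composite variable $g_{ij} = g_j^{-1}g_i$. Writing $\phi\colon \SL\times \SL \to \SL$, $\phi(g_i,g_j)=g_j^{-1}g_i$, we have $S_{ij}^{red}=S'_{ij}\circ \phi$, so the hessian is computed by a chain rule provided both actions sit at critical points of their imaginary parts, so that the Hessians are parametrization-invariant bilinear forms.

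First I verify stationarity. By the previous lemma, $\{g_i^0\}$ is an extremal point of $S_{ij}^{red}$, so $d\Im S_{ij}^{red}=0$ at $(g_i^0,g_j^0)$. Since $\phi$ is a submersion and $d\Im S_{ij}^{red} = (d\Im S'_{ij})\circ d\phi$, this forces $d\Im S'_{ij}|_{g_{ij}^0}=0$ as well. At such critical points the standard chain rule for Hessians at critical points gives the tensorial identity
\begin{equation*}
d^2\Im S_{ij}^{red}|_{(g_i^0,g_j^0)}(V,V') = d^2\Im S'_{ij}|_{g_{ij}^0}\bigl(d\phi(V),d\phi(V')\bigr),
\end{equation*}
without contributions from the second jet of $\phi$.

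Second, I compute $d\phi$ on a right-invariant tangent vector $V|_{(g_i^0,g_j^0)}=\Rv_i(v(i))+\Rv_j(v(j))$. Following the integral curve, $\phi(e^{-tv(i)}g_i^0,\, e^{-tv(j)}g_j^0) = (g_j^0)^{-1}e^{tv(j)}e^{-tv(i)}g_i^0$, and rewriting this as $g_{ij}^0\, e^{tL_v+O(t^2)}$, a direct first-order expansion gives
\begin{equation*}
L_v = (g_i^0)^{-1}\cdot\bigl(v(j)-v(i)\bigr),
\end{equation*}
i.e.\ $d\phi(V) = \Lv(L_v)|_{g_{ij}^0}$. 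Substituting into the preceding display and recalling the definition $I'_{ij}(L,L) = d^2\Im S'_{ij}(\Lv(L),\Lv(L))$ yields the claimed formula. The sign mismatch between $L_v$ and the argument printed in the statement is immaterial because $I'_{ij}$ is a symmetric quadratic form; the off-diagonal $(v,v')$ case then follows by polarization.

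The main potential pitfall is purely bookkeeping—keeping straight the left-invariant vs.\ right-invariant conventions on $\SL$, the inverse in $\phi$, and the minus sign built into $\Rv$ via the curve $g\mapsto e^{-tL}g$; once these are fixed, the argument reduces to a one-line chain rule. No deeper obstacle is present: the only nontrivial input beyond elementary calculus is the stationarity of $\Im S'_{ij}$ at $g_{ij}^0$, which was already established via extremality of $\{g_i^0\}$ for $S_{ij}^{red}$ together with $\phi$ being a submersion.
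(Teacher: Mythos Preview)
Your proof is correct and follows essentially the same route as the paper. The paper carries out the computation explicitly via the BCH expansion of $e^{tg_i^{-1}\cdot L_j}e^{-tg_i^{-1}\cdot L_i}$ and then invokes $\partial\Im S'_{ij}=0$ to discard the commutator term, which is exactly the concrete form of your abstract ``chain rule for Hessians at critical points''; your submersion argument for why $d\Im S'_{ij}|_{g_{ij}^0}=0$ is a clean justification of a step the paper simply asserts.
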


\begin{proof}
Standard result about functions of the form $f(g_j^{-1}g_i)$. We use the right invariant vector fields, thus
\begin{equation}
g_i=e^{-tL_i}g_i ,\quad g_j=e^{-tL_j}g_j. 
\end{equation}
We have for left invariant vector fields
\begin{equation}
\partial^2 f(L,L)=\frac{d^2}{dt^2}|_{t=0} f(ge^{tL}).
\end{equation}
We can now compute for $v$
\begin{equation}
v(i)=L_i,\ v(j)=L_j,\ v(k)=0\text{ for } k\not=i,j.
\end{equation}
Let us compute second derivative of $F(g_i,g_j)=f(g_{ij})$ where $g_{ij}=g_j^{-1}g_i$
\begin{equation}
\partial^2 F(v,v)=\frac{d^2}{dt^2}|_{t=0} f(g_j^{-1}e^{tL_j}e^{-tL_i}g_i)=\frac{d^2}{dt^2}|_{t=0} f(g_j^{-1}g_ie^{X(t)}).
\end{equation}
We used BCH formula and commuted (we use notation $g\cdot L=gLg^{-1}$)
\begin{align}
&e^{X(t)}=e^{tg_i^{-1}L_jg_i} e^{-tg_i^{-1}L_ig_i},\\
&X(t)=tg_i^{-1}\cdot L_j-tg_i^{-1}\cdot L_i-\frac{t^2}{2}[g_i^{-1}\cdot L_j,g_i^{-1}\cdot L_i]+O(t^3).
\end{align}
We use now $\partial \Im S'_{ij}=0$ to get
\begin{equation}
\partial^2 \Im S_{ij}^{red}(v,v)=\partial^2\Im S'_{ij}(g_i^{-1}\cdot v(i)-g_i^{-1}\cdot v(j),g_i^{-1}\cdot v(i)-g_i^{-1}\cdot v(j)),
\end{equation}
so we found the desired result.
\end{proof}

\begin{lm}\label{lm-no}
Let us suppose that $0\not=v\in V_{tot}$ satisfies
\begin{equation}
\forall_{1\leq i<j\leq 5} I_{ij}(v,v)=0,
\end{equation}
then there exist $1\leq a<b\leq 4$ such that the bivectors
\begin{equation}
B_{a5}^{0},\ \Hodge B_{a5}^{0},\ B_{b5}^{0},\ \Hodge B_{b5}^{0},\ B_{ab}^{0},\ \Hodge B_{ab}^{0},
\end{equation}
are linearly dependent.
\end{lm}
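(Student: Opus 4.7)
The plan is to reduce the hypothesis to per-pair annihilator conditions and then combine the resulting restrictions on $v(a)$ into a single linear relation among the six listed bivectors.

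First I would set $w_{ij} := g_i^{-1}\cdot v(i)-g_i^{-1}\cdot v(j)\in so(1,3)$ for each pair $1\le i<j\le 5$; this is real because $v\in V_{tot}$ is real. The formula of the immediately preceding lemma gives $I_{ij}(v,v)=I'_{ij}(w_{ij},w_{ij})$, and since the fundamental stationary point is extremal for $S'_{ij}$ (Section~\ref{sec:extension}), the real symmetric form $I'_{ij}=\partial^2\Im S'_{ij}$ is nonnegative definite; hence $I_{ij}(v,v)=0$ forces $I'_{ij} w_{ij}=0$. By the preceding annihilator lemma in Section~\ref{sec:extension}, combined with strict positivity of each $\LL_{ij}'$ at the fundamental stationary point (Section~\ref{sec:simplicity}), this yields
\[
w_{ij}\in\Span_\R\{B_{ij}^\zero,\Hodge B_{ij}^\zero\}.
\]
Acting by $g_i^0$ and using $g_i^0\cdot B_{ij}^\zero=B_{ij}^0$ (the Hodge star commutes with the adjoint action) rewrites this as $v(i)-v(j)\in\Span_\R\{B_{ij}^0,\Hodge B_{ij}^0\}$ for every pair.

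Next, with $v(5)=0$, the pairs $(a,5)$ yield $v(a)=\alpha_a B_{a5}^0+\beta_a\Hodge B_{a5}^0$ for $a=1,\dots,4$ with real $\alpha_a,\beta_a$; for $1\le a<b\le 4$ we additionally have $v(a)-v(b)=\gamma_{ab}B_{ab}^0+\delta_{ab}\Hodge B_{ab}^0$. Substituting produces
\[
\alpha_a B_{a5}^0+\beta_a\Hodge B_{a5}^0-\alpha_b B_{b5}^0-\beta_b\Hodge B_{b5}^0-\gamma_{ab}B_{ab}^0-\delta_{ab}\Hodge B_{ab}^0=0,
\]
a linear relation among the six listed bivectors. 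Since $v\ne 0$, some $v(c)$ with $c\in\{1,\dots,4\}$ is nonzero, i.e.\ $(\alpha_c,\beta_c)\ne(0,0)$; choosing any pair in $\{1,\dots,4\}$ containing this $c$ makes the coefficient of $B_{c5}^0$ or $\Hodge B_{c5}^0$ nonzero, so the relation is nontrivial, giving the required $a,b$.

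The substantive input is the strict positivity of $\LL_{ij}'$ at $B_{ij}^\zero$, proved in Section~\ref{sec:simplicity}; that is what pins the annihilator of $I'_{ij}$ down to the two Casimir directions $B_{ij}^\zero,\Hodge B_{ij}^\zero$. Given that input, the present lemma is essentially formal linear algebra, and I do not foresee a real obstacle beyond carefully tracking the adjoint conjugations by $g_i^0$ that relate $B_{ij}^\zero$ to $B_{ij}^0$.
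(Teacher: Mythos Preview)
Your proposal is correct and follows essentially the same approach as the paper: reduce $I_{ij}(v,v)=0$ to the kernel condition $v(i)-v(j)\in\Span\{B_{ij}^0,\Hodge B_{ij}^0\}$ via positivity and the identification of $\ker I'_{ij}$ with the Casimir directions, then use $v(5)=0$ together with $v\neq 0$ to extract a nontrivial linear relation among the six bivectors. The only minor slip is the cross-reference: the lemma pinning down $\ker(\partial^2\Im S'_{ij})$ as $\Span\{B_{ij}^\zero,\Hodge B_{ij}^\zero\}$ lives in the symplectic-geometry section (subsection on $S'_{ij}$), not in Section~\ref{sec:extension}.
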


\begin{proof}
As all $I_{ij}$ are positive definite and $I_{ij}'$ has the kernel spanned by
\begin{equation}
B_{ij}^{\zero},\ \Hodge B_{ij}^{\zero},
\end{equation}
we have
\begin{equation}
I_{ij}(v,v)=0\Leftrightarrow v(i)-v(j)\in \Span\{B_{ij}^{0},\ \Hodge B_{ij}^{0}\}.
\end{equation}
We see that from $I_{i5}(v,v)=0$ it follows that $v(i)\in \Span\{B_{i5}^{0},\ \Hodge B_{i5}^{0}\}$ and thus as $v$ is nonzero there exist $i,j\not=5$ such that
\begin{equation}
0\not=v(i)\in \Span\{B_{i5}^{0},\ \Hodge B_{i5}^{0}\}
\end{equation}
and also
\begin{equation}
v(i)=(v(i)-v(j))+v(j)\in \Span\{B_{ij}^{0},\ \Hodge B_{ij}^{0},B_{j5}^{0},\ \Hodge B_{j5}^{0}\}.
\end{equation}
This means linear dependence.
\end{proof}

\begin{lm}\label{lm-rec}
If the reconstructed $4$-simplex (in any signature) with spacelike faces is nondegenerate then
\begin{equation}
B_{a5}^{\geom},\ \Hodge B_{a5}^{\geom},\ B_{b5}^{\geom},\ \Hodge B_{b5}^{\geom},\ B_{ab}^{\geom},\ \Hodge B_{ab}^{\geom},
\end{equation}
are linearly independent for $\{a,b,5\}$ distinct.
\end{lm}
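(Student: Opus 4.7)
I would begin by identifying the combinatorial structure of the three triangles. The triangles $(a,5)$, $(b,5)$, $(a,b)$ of the $4$-simplex pairwise, and in fact jointly, share the two vertex indices in $\{1,\ldots,5\}\setminus\{a,b,5\}$; call them $p$ and $q$. Hence all three triangles contain the common edge from $v_p$ to $v_q$, with direction vector $e := v_q - v_p$, and the non-shared third vertex of each triangle is, respectively, $b$, $a$, $5$. Using the standard formula for a triangle's area bivector, this gives
\begin{equation*}
B_{a5}^\geom \propto e\wedge\alpha_1,\qquad B_{b5}^\geom \propto e\wedge\alpha_2,\qquad B_{ab}^\geom \propto e\wedge\alpha_3,
\end{equation*}
with $\alpha_1 = v_b - v_p$, $\alpha_2 = v_a - v_p$, $\alpha_3 = v_5 - v_p$. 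Together with $e$ these are, up to sign, the four edges of the $4$-simplex emanating from $v_p$, so nondegeneracy of the $4$-simplex is exactly the statement that $\{e,\alpha_1,\alpha_2,\alpha_3\}$ is a basis of $\R^4$.

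It then follows that $V := \Span\{B_{a5}^\geom, B_{b5}^\geom, B_{ab}^\geom\} = e\wedge\R^4$ is three-dimensional. Since $\dim\Lambda^2\R^4 = 6$, the six bivectors in the statement are linearly independent if and only if $\Lambda^2\R^4 = V\oplus\Hodge V$, equivalently $V\cap\Hodge V = \{0\}$. The crux is to prove the latter, and the spacelike-faces hypothesis enters here decisively: since the shared edge $e$ lies in three spacelike $2$-planes, the restriction of the metric to any plane containing $e$ is positive definite, and in particular $e\cdot e \neq 0$ (in Euclidean signature this is automatic).

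To finish, suppose $B = e\wedge u \in V\cap\Hodge V$; then $\Hodge B = e\wedge w$ for some $w$, so $e\wedge\Hodge B = 0$. Using the four-dimensional identity $e\wedge\Hodge B = \pm\Hodge(e\llcorner B)$ together with $e\llcorner(e\wedge u) = (e\cdot e)u - (e\cdot u)e$, this forces $u\propto e$ (since $e\cdot e\neq 0$) and hence $B = 0$, completing the argument. The main obstacle I anticipate is the clean use of the non-nullness of $e$: a null $e$ does genuinely allow $\Hodge(e\wedge\alpha_i)$ to fall back into $V$ for suitable $\alpha_i$, so the spacelike-faces hypothesis is essential here and cannot simply be dropped; once that point is settled, the remainder is a one-line application of standard Hodge and contraction identities in four dimensions.
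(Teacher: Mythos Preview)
Your argument is correct and follows essentially the same route as the paper's own proof: both identify the common edge $e$ shared by the three triangles, write each $B_{ij}^{\geom}$ as $e\wedge(\text{something})$, use the spacelike-faces hypothesis to conclude $e\cdot e\neq 0$, and then exploit the contraction identity $e\llcorner\Hodge(e\wedge\cdot)=0$ together with $e\llcorner(e\wedge u)=(e\cdot e)u-(e\cdot u)e$ to separate the $B$'s from the $\Hodge B$'s. The only packaging difference is that the paper contracts a putative linear relation directly with $e$ (and then with $e$ after applying $\Hodge$) to kill the $\Hodge B$ terms and read off the coefficients, whereas you phrase the same computation as the statement $V\cap\Hodge V=\{0\}$ for $V=e\wedge\R^4$; the underlying identity being used is the same. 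One minor wording issue: ``the restriction of the metric to any plane containing $e$ is positive definite'' overstates things --- only the three face planes are known to be spacelike --- but since all you actually need (and use) is $e\cdot e\neq 0$, which follows from $e$ lying in even one spacelike plane, this does not affect the argument.
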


\noindent Here by $B_{ij}^\geom$ we denote geometric bivectors of the reconstructed $4$-simplex (see \cite{Kaminski2018}).

\begin{proof}
Let us assume $a=3$, $b=4$. The bivectors $B_{ij}^{\geom}$ for $i,j\in \{3,4,5\}$ can be written as
\begin{equation}
B_{ij}^{\geom}=\eta_{ij}\wedge e_{12},
\end{equation}
where $\eta_{ij}\perp e_{12}$ and $e_{12}$ is the edge vector connecting vertex $1$ with $2$ (this edge is spacelike). Moreover $\eta_{ij}$ are independent if the $4$-simplex is nondegenerate.

Let us notice that $e_{12}\llcorner\Hodge B_{ij}^{\geom}=0$ and $e_{12}\llcorner B_{ij}^{\geom}=-\eta_{ij}|e_{12}|^2$.

Let us assume that there is a linear equation for the bivectors 
\begin{equation}
\sum_{ij\in\{3,4,5\}} \lambda_{ij}B_{ij}^{\geom}+\lambda_{ij}'\Hodge B_{ij}^{\geom}=0.
\end{equation}
Contracting it with $e_{12}$ we get
\begin{equation}
\sum_{ij\in\{3,4,5\}} \lambda_{ij}\eta_{ij}=0\Rightarrow \lambda_{ij}=0.
\end{equation}
Taking the Hodge dual of the equation and then contracting with $e_{12}$ we get
\begin{equation}
\sum_{ij\in\{3,4,5\}} \lambda_{ij}'\eta_{ij}=0\Rightarrow \lambda_{ij}'=0.
\end{equation}
Thus the bivectors are linearly independent if the reconstructed $4$-simplex is nondegenerate.
\end{proof}

\begin{thm}
The reduced hessian for the lorentzian EPRL model (and for the Conrady-Hnybida extension for spacelike faces) is nondegenerate at the stationary point that corresponds to a nondegenerate $4$-simplex.
\end{thm}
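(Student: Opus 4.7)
The plan is to reduce the question of degeneracy of $H^{red}$ to the linear independence statement of Lemma \ref{lm-rec}, by chaining together the positivity results already proved in the preceding sections.

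First, I would assume for contradiction that $H^{red}$ has a nontrivial kernel. Since each $\Im H^{red}_{ij}$ is a nonnegative hermitian form (by the lemma proved in Section \ref{sec:extension}), the sum $\Im H^{red} = \sum_{i<j}\Im H^{red}_{ij}$ is also nonnegative. Lemma \ref{lm:pos} applied to $H^{red} = \Re H^{red} + i\Im H^{red}$ then produces a nonzero real vector $w \in V_{tot}$ that is annihilated by $\Im H^{red}$, and Lemma \ref{lm:sum} forces $I_{ij}(w,w) = 0$ for every pair $i<j$.

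Next, by the formula of Section \ref{sec:reduced-hes}, the condition $I_{ij}(w,w)=0$ is equivalent to $g_i^{-1}\cdot w(i) - g_i^{-1}\cdot w(j)$ lying in the kernel of $I'_{ij}$. The kernel of $I'_{ij}$ was identified earlier (in the last lemma of Section \ref{symp:G}, whose hypothesis of strict positivity of $\LL'_{ij}$ is exactly what is established in Section \ref{sec:simplicity}) as $\Span\{B_{ij}^{\zero},\Hodge B_{ij}^{\zero}\}$. Undoing the adjoint action of $g_i^{0}$ yields $w(i)-w(j) \in \Span\{B_{ij}^{0},\Hodge B_{ij}^{0}\}$ for all $i<j$. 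Lemma \ref{lm-no} then supplies indices $1\leq a<b\leq 4$ for which the six bivectors $B_{a5}^{0},\Hodge B_{a5}^{0},B_{b5}^{0},\Hodge B_{b5}^{0},B_{ab}^{0},\Hodge B_{ab}^{0}$ are linearly dependent.

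The final step is the geometric identification: at a stationary point corresponding to a nondegenerate $4$-simplex, the algebraic bivectors $B_{ij}^{0}$ (after parallel transport by the group elements $g_i^{0}$) coincide, up to the signs and orientation conventions of the reconstruction theorem of \cite{Kaminski2018}, with the geometric face bivectors $B_{ij}^{\geom}$. Linear dependence of the six algebraic bivectors is therefore equivalent to linear dependence of the six geometric ones, which directly contradicts Lemma \ref{lm-rec}. The main obstacle in this plan, which is really bookkeeping rather than content, is to carefully track the conventions relating $B_{ij}^{0}$ and $B_{ij}^{\geom}$ so that the linear dependence passes cleanly from one set of bivectors to the other; once that is done, the contradiction is immediate and the hessian is nondegenerate.
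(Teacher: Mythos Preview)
Your reduction from degeneracy of $H^{red}$ to the linear-dependence conclusion of Lemma~\ref{lm-no} is exactly the paper's argument, and your steps leading there are correct. The gap is in the final step, which you label bookkeeping. It is more than that, and the paper handles it by a genuine case distinction that your description does not cover.

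When the reconstructed $4$-simplex is Lorentzian, the relation is $B_{ij}^{0}=\tau(B_{ij}^{\geom})$ with $\tau(B)=B+\gamma\Hodge B$. This is not a sign or orientation convention but the twisting map; however, since $\tau$ is invertible and commutes with $\Hodge$, it does carry linear dependence of the six $B^{0}$-bivectors to linear dependence of the six $B^{\geom}$-bivectors, and Lemma~\ref{lm-rec} closes the argument. With some goodwill this case could be absorbed into ``conventions.''

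The real issue is the other-signature case (a Euclidean or split $4$-simplex arising in the Lorentzian model). Here $B_{ij}^{0}$ at the given stationary point is \emph{not} related to $B_{ij}^{\geom}$ by any single linear automorphism of $so(1,3)$: the geometric bivector is assembled from \emph{two} distinct stationary points, with $\tau^{-1}(B_{ij}^{\pm})$ supplying its self-dual and anti-self-dual parts respectively. No amount of sign-chasing will turn dependence among $\{B_{ab}^{0},\Hodge B_{ab}^{0},\ldots\}$ directly into dependence among $\{B_{ab}^{\geom},\Hodge B_{ab}^{\geom},\ldots\}$. The paper instead writes $B_{ij}^{+}=\tau(\Hodge(v_{ij}^{+}\wedge N^{0}))$, applies $\Li_i^{\tau}$ and $\Ki_i^{\tau}$ to the assumed linear relation to extract two relations among the three vectors $v_{ij}^{+}\in {N^{0}}^{\perp}$, concludes that these vectors (and hence the self-dual parts of the $B_{ij}^{\geom}$) are linearly dependent, and only then invokes Lemma~\ref{lm-rec}. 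You need this extra argument; without it the non-Lorentzian signature case is not handled.
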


\begin{proof}
If $Hv=0$ then we are in the situation from lemma \ref{lm-no}. From positivity of the $I_{ij}$ it thus follows that
\begin{equation}\label{eq:linear-dep}
B_{a5}^{0},\ \Hodge B_{a5}^{0},\ B_{b5}^{0},\ \Hodge B_{b5}^{0},\ B_{ab}^{0},\ \Hodge B_{ab}^{0},
\end{equation}
are linearly dependent.

Let us consider now separetely two cases:
\begin{enumerate}
\item If the stationary point corresponds to a lorentzian $4$-simplex then
\begin{equation}
B_{ij}^{0}=\tau(B_{ij}^\geom)
\end{equation}
and $\tau$ preserves the space \eqref{eq:linear-dep}. By lemma \ref{lm-rec} we have a contradiction.
\item If the stationary point ($+$) corresponds to a $4$-simplex solution with other signature then there is the second point ($-$) and
\begin{equation}
B_{ij}^+=\tau(\Hodge v_{ij}^+\wedge N^0),\quad B_{ij}^-=\tau(\Hodge v_{ij}^-\wedge N^0),
\end{equation}
and $B_{ij}^\geom$ has selfdual and antiselfdual parts given by $\tau^{-1}(B_{ij}^\pm)$. From \eqref{eq:linear-dep} it follows that there exist constants $\lambda_{ij},\lambda_{ij}'$ such that
\begin{equation}
\sum_{ij\in\{a,b,5\},\ i<j} \lambda_{ij}B_{ij}^++\lambda_{ij}'\Hodge B_{ij}^+=0,
\end{equation}
thus taking $\Li_i^\tau$ and $\Ki_i^\tau$ parts we get
\begin{equation}
\sum_{ij\in\{a,b,5\},\ i<j} \lambda_{ij}v_{ij}^+=0,\quad 
\sum_{ij\in\{a,b,5\},\ i<j} \lambda_{ij}'v_{ij}^+=0.
\end{equation}
As some coefficients need to be nontrivial we get that $v_{ij}^+$ and thus also
$B_{ij}^+$ ($(i,j)\in \{a,b,5\}$) are linearly dependent. But this means that
\begin{equation}
B_{a5}^{\geom},\ \Hodge B_{a5}^{\geom},\ B_{b5}^{\geom},\ \Hodge B_{b5}^{\geom},\ B_{ab}^{\geom},\ \Hodge B_{ab}^{\geom},
\end{equation}
are linearly dependent and from lemma \ref{lm-rec} we have a contradiction.
\end{enumerate}
Independently of the signature of the reconstructed $4$-simplex the hessian is nondegenerate.
\end{proof}

\section{Summary}

We showed that the hessian in the EPRL and Conrady-Hnybida (spacelike surfaces case) is nondegenerate for any stationary point (corresponding to a nondegenerate $4$-simplex of either lorentzian, euclidean or split singature). We also showed nondegeneracy  for the euclidean $\gamma<1$ case.  Our method works fine also for $\gamma>1$, but we have not provided the details in this case. However, the method does not extend immediately to the situation when some of the faces are timelike (the asymptotic of this case was considered recently in \cite{Liu2019}). The action in this case is purely real, and as we based our proof on the properties of imaginary part of the action, this case cannot be covered with the tools used in our paper unless they will be properly modified. The issue deserves a separate treatment and we leave this topic for future research.

\medskip
\noindent{\textbf{Acknowledgements:}}
We thank Marcin Kisielowski for fruitful discussions at the early stage of this project.

\appendix

\section{Notation}
\label{sec:appendix}

In this section we collect our notation:
\begin{enumerate}
\item The signature of the metric is $(+---)$.
\item Bivectors $so(1,3)=\Lambda^2 \R^4$ (we use identification by the scalar product). The action on vectors can be expressed as
\begin{equation}
(v\wedge v') (w)=v(v'\cdot w)-v' (v\cdot w).
\end{equation}
 We define also a scalar product $(\cdot,\cdot)$ on bivectors
\begin{equation}
(v\wedge w, v'\wedge w')=\det\left(\begin{array}{cc}
    v\cdot v' &v\cdot w'\\ w\cdot v' & w\cdot w'\end{array}\right).
\end{equation}
Hodge star operation is denoted by $\Hodge$.
\item The adjoint action on the Lie algebra is defined by 
\begin{equation}
g\cdot L=gLg^{-1}.
\end{equation}
Coadjoint action on $P$ is defined by $g\cdot P (L)=P(g^{-1}\cdot L)$.
\item The hessian is a symmetric two form (tensor) on the tangent vectors for a function $f$ at the point where the first derivative vanishes. We denote this form by $\partial^2 f$.
\item The stationary point $\{g_i^0,[\z_{ij}^0]\}$, $g_5^0=1$ of the total action is referred to as the fundamental stationary point. The bivectors at this stationary point are denoted $B_{ij}^0$ (in the simplex frame) and $B_{ij}^\zero=(g_i^0)^{-1}B_{ij}^0$ (see the beginning of section \ref{sec:extension}). The geometric bivectors $B_{ij}^\geom$ are described in \cite{Kaminski2018} and appear in section \ref{sec:reduced-hes}.
\item Weyl spinor spaces ${\SP}^\pm=\C^2$: We denote spinors from $\SP^+$ by $\z$, $\v$, $\u$ etc. Clifford elements for any vector $v$ are $\eta^\pm(v)\colon \SP^\pm\rightarrow {\mathcal S}^\mp$ fulfilling
\begin{equation}
\eta^\mp(v)\eta^\pm(v')+\eta^\mp(v')\eta^\pm(v)=(v\cdot v'){\mathbb I}_{\SP^\pm}.
\end{equation}
The Lie algebra isomorphism
$so(1,3)$ to $sl(2,\C)$ (traceless matrices) is 
\begin{equation}
B\rightarrow \M(B), \quad \M(v\wedge v')=\frac{1}{4}\left(\eta^-(v)\eta^+(v')-\eta^-(v')\eta^+(v)\right).
\end{equation}
For two spinors $\u,\v$ we denote
\begin{equation}
[\u,\v]=\u^T\omega \v,\quad \omega=\bmat{0 &1\\ -1 &0}.
\end{equation}
Every traceless matrix can be written as ${\mathbb N}=\frac{1}{2}(\u_+\u_-^T+\u_-\u_+^T)\omega$ and
\begin{equation}
{\mathbb N}(\u_\pm)=\pm\frac{1}{2}[\u_-,\u_+]\u_\pm
\end{equation}
(see section \ref{sec:coadjoint} and \cite{Kaminski2018}).
\item $p^L$ and $p^R$ are left and right covectors (see section \ref{symp:G}) and 
\begin{equation}
\delta^LS(L)=\Lv(L)S,\quad \delta^RS(L)=\Rv(L)S,\quad 
\end{equation}
where $\Lv$ and $\Rv$ are left and right derivatives. We also denote $\delta=\delta^L$.
\item We denote
\begin{equation}
\delta_\z f(\z)=\delta^L f(g^{-1}\z)|_{g=1}
\end{equation}
(see section \ref{symp:G}).
\item $[\cdot]$ is a relation on spinors
\begin{equation}
[\z]=[\w]\Leftrightarrow \exists 0\not=\lambda \in \C\colon \z=\lambda \w,
\end{equation}
thus $[\z]$ is a point of $\CP$, $[\z_{ij}^\C]$ is a point on complexified $\CP^\C$.
\item The vector fields of the action of $\SL$ on $\CP$ are denoted by $\Lv_\CP(L)$ for $L\in so(1,3)$. They correspond to the curves
\begin{equation}
t\rightarrow [e^{-tL}\z].
\end{equation}
Similarly, the vector field of the action of $\SL$ on $\SP^+$ are denoted by $\Lv_{\SP^+}(L)$ for $L\in so(1,3)$. They correspond to the curves
\begin{equation}
t\rightarrow e^{-tL}\z.
\end{equation}
\item The definition of $\Li_i$ $\Ki_i$  is in section \ref{sec:biv-decomp}.
For the twisting map $\tau$, and twisted versions $\Li^\tau$, $\Ki_i^\tau$ see section \ref{sec:simplicity}.
\item The vectors $k_{\pm 1}^i(v)$ are defined in \ref{sec:biv-decomp}.
\item $X_{n,\rho}$ is a coadjoint orbit space defined in equation \eqref{eq:X}.
\item The projection from $X_{n,\rho}$ (coadjoint orbit) to $\CP$ is denoted by $\pi$. The function $f$ that is constant along the fibers can be pushed forward to $\CP$ and  such push forward is denoted by $[f]_\CP$ (see section \ref{sec:coadjoint}).
\item $S^{red}$, $S_{ij}^{red}$ are defined in section \ref{sec:reduced}. Their hessians are denoted by $H^{red}$ and $H_{ij}^{red}$.
\item $S'_{ij}$ and $g_{ij}=g_j^{-1}g_i$ is defined in section \ref{sec:reduced}.
\item $\LL$ denotes lagrangeans. The subscript denotes the (part of the) action generating the given lagrangean. We use $'$ to indicate lagrangeans in the coadjoint orbit space.
\item The form $\RR$ on the tangent space of the lagrangean at the real point is defined in equation \eqref{eq:R}.
\end{enumerate}

%%%%%%%%%%%%%%%%%%%%%%%%%%%%%%%%%%%%5

\bibliography{bibliography}{}
\bibliographystyle{ieeetr}

\end{document}